\documentclass[a4paper,11pt]{article}

\usepackage[english]{babel}
\usepackage{amsmath}
\usepackage{amsthm}
\usepackage{enumerate}
\usepackage{graphicx}
\usepackage{mathrsfs}
\usepackage{amssymb}
\usepackage[usenames,dvipsnames]{color}
\usepackage{empheq}
\usepackage{hyperref}
\hypersetup{
colorlinks = true, 
linkcolor = {blue}
}
\usepackage[affil-it]{authblk}


\newtheorem{prop}{Proposition}
\newtheorem{lemma}[prop]{Lemma}
\newtheorem{thm}[prop]{Theorem}
\newtheorem{oss}[prop]{Remark}

\theoremstyle{definition}
\newtheorem{defn}[prop]{Definition}
\newtheorem{cor}[prop]{Corollary}

\begin{document}

\title{Gap probabilities for the Generalized Bessel process: a Riemann-Hilbert approach}
\author{Manuela Girotti
\thanks {email: \texttt{mgirotti@mathstat.concordia.ca}}}
\affil{\small Department of Mathematics and Statistics, Concordia University \\
1455 de Maisonneuve Ouest, Montr\'eal, Qu\'ebec, Canada, H3G 1M8}
\date{}

\maketitle

\tableofcontents

\begin{abstract}
We consider the gap probability for the Generalized Bessel process in the single-time and multi-time case. We prove that the scalar and matrix Fredholm determinants of such process can be expressed in terms of  determinants of Its-Izergin-Korepin-Slavnov integrable kernels and thus related to suitable Riemann-Hilbert problems. In the single-time case, we construct a Lax pair formalism, while in the multi-time case we explicitly define a new multi-time kernel to study.
\end{abstract}

\section{Introduction}

The Generalized Bessel process is a determinantal point  process \cite{Soshnikov} defined in terms of a trace-class integral operator acting on $L^2(\mathbb{R}_+)$, with kernel
\begin{equation}
K^{GEN} (x,y;\tau) = \int_{\hat \gamma} \frac{ds}{2\pi i }\int_{\gamma} \frac{dt}{2\pi i} \, \frac{e^{-xs - \frac{\tau}{s} + \frac{1}{2s^2} + yt + \frac{\tau}{t} - \frac{1}{2t^2}}}{s-t} \left(\frac{s}{t} \right)^\nu \label{kernelcrit2}
\end{equation}
with $\nu>-1$; the logarithmic cut is on $\mathbb{R}_-$. The curve $\gamma$ and $\hat \gamma$ are described in Figure \ref{ArnoGenBes}.

The Generalized Bessel kernel was first introduced as a critical kernel by Kuijlaars \textit{et al.} in \cite{Kuij2} and \cite{Kuij}. In these articles, a model of non-intersecting squared-Bessel paths, starting at time $t = 0$ at the same positive value $x=\kappa>0$ and ending at time $t = 1$ at $x = 0$, was proposed and studied.

The positions of the paths at any given time $t\in (0,1)$  are a determinantal point process with correlation kernel built out of the transition probability density of the squared Bessel process. In \cite{Kuij2}, it was proven that, after appropriate scaling, the paths fill out a region in the $tx$-plane as in \figurename \ \ref{arnoSQB}: the paths stay initially away from the  axis $x = 0$, but at a certain critical time $t^*$ the smallest paths come to the hard edge $x=0$ and then remain close to it. 

\begin{figure}[!h]
\centering
\includegraphics[width=.8\textwidth]{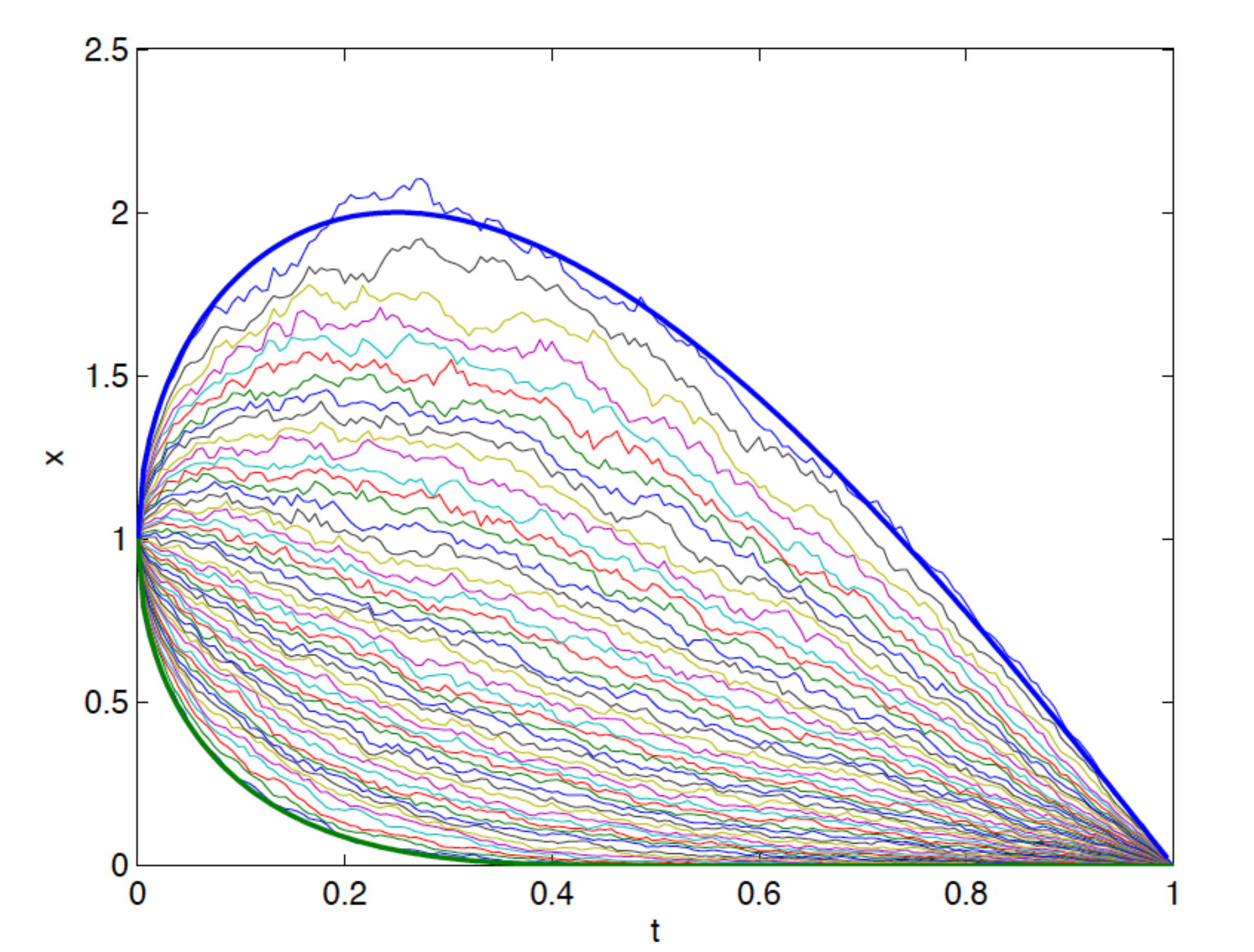}
\caption{Numerical simulation of 50 rescaled non-intersecting Squared Bessel Paths with $\kappa = 1$ (taken from \cite{Kuij2}).}
\label{arnoSQB}
\end{figure}

As the number of paths tends to infinity, the local scaling limits of the correlation kernel are the following: the sine kernel appears in the bulk, the Airy kernel at the soft edges, i.e. the upper boundary for all $t\in (0,1)$ and the lower boundary of the limiting domain for $t<t^*$, while for $t>t^*$, the Bessel kernel appears at the hard edge $x=0$, see \cite[Theorems 2.7-2.9]{Kuij2}. 

Thus, in the critical time $t=t^*$, there is a transition between the Airy and the Bessel kernel and the dynamics at that point is described by the new (critical) kernel (\ref{kernelcrit2}).

We prefer to refer to this kernel, and to its multi-time counterpart, as  ``Generalized Bessel kernel" because of the several analogies with the Bessel kernel (\cite{Me}) appearing along our study. First of all, the contours setting is the same one as for the Bessel kernel (see \cite{Me}); many of the calculations performed in \cite{Me} for the Bessel kernel are here reproduced with very few adjustments. Moreover, as it will be clear in Section \ref{1timeGENB}, gap probabilities of the Generalized Bessel operator are related to a Lax pair which belongs to a higher order Painlev\'e III hierarchy, while it is known that the gap probabilities of the Bessel kernel is related to the Painlev\'e III transcendent, as shown in \cite{Me}.



%

In this paper we will show that the  correlation functions related to the Generalized Bessel process can be expressed in terms of determinants of a kernel $K(x, y)$ (matrix-valued in the multi-time case) in the sense of Its-Izergin-Korepin-Slavnov (\cite{IIKS}). Namely, the kernel can be written in the form
\begin{gather}
K(x,y) = \frac{\textbf{f}^T(x) \cdot \textbf{ g} (y)}{x-y}
\end{gather}
where $\textbf{ f}, \textbf{ g}$ are two matrices of a given dimension such that $\textbf{ f}^T(x) \cdot \textbf{ g}(x) =0$.
We refer to the review paper \cite{Soshnikov} that also explains how the ``gap probabilities" (probability of having no particles in certain regions) are related to Fredholm determinants.


In such case, the computation of the Fredholm determinant is reduced to the study of a certain Riemann-Hilbert problem canonically related to the operator's kernel (for a concise account, see \cite{JohnSasha}). The Riemann-Hilbert formulation is very useful for finding some differential equations and studying asymptotic properties of the determinant. Moreover, it will be possible to connect it to the Jimbo-Miwa-Ueno $\tau$ function.




Our approach is the same as the one used in \cite{MeM} for the scalar Airy and Pearcey operators, in \cite{MeMmulti} for the matrix Airy and Pearcey operators and in \cite{Me} for the Bessel operator. 

As an example of possible applications we describe how to obtain a system of isomonodromic Lax equations for the (single-time) process. 

Moreover, having a RiemannÐHilbert formulation for these Fredholm determinants will allow the study of asymptotics of Generalized Bessel gap probabilities and their connection with Airy and Bessel gap probabilities, using steepest descent methods, along the lines of \cite{MeM}. 

The formulation of the multi-time Generalized Bessel kernel is a completely new result and its derivation has been addressed in the Appendix. An equivalent formulation has been proposed and autonomously derived by Delvaux and Veto (\cite{Balint}).

The paper is organized as follows: in section \ref{1timeGENB} we will deal with the single-time Generalized Bessel operator restricted to a generic collection of intervals; in the subsection \ref{1timeGENB0a} we will focus on the single-time Generalized Bessel process restricted to a single interval $[0,a]$: we will find a Lax pair and we will be able to make a connection between the Fredholm determinant and a Painlev\'e III hierarchy. 
In section \ref{ntimeGENB} we will study the gap probabilities for the multi-time Bessel process. 
In the Appendix, we show how we found the multi-time Generalized Bessel kernel and we make a comparison with Delvaux and Veto's one. We prove that these two kernels are equivalent up to a transposition of the operator and a translation of the parameter $\tau$.

\section{Single-time Generalized Bessel}\label{1timeGENB}

The Generalized Bessel kernel is
\begin{gather}
K^{GEN}(x,y;\tau) = 
  \int_{\gamma \times \hat \gamma} \frac{dt \, ds}{(2\pi i)^2 } \, \frac{e^{\phi_\tau(y,t) - \phi_\tau(x,s) }}{s-t} \left(\frac{s}{t} \right)^\nu \\
  \phi_\tau(z,t) := zt+ \frac{\tau}{t}-\frac{1}{2t^2} \label{phaseGEN}
\end{gather}

\begin{figure}[!h]
\centering
\includegraphics[width =.7\textwidth]{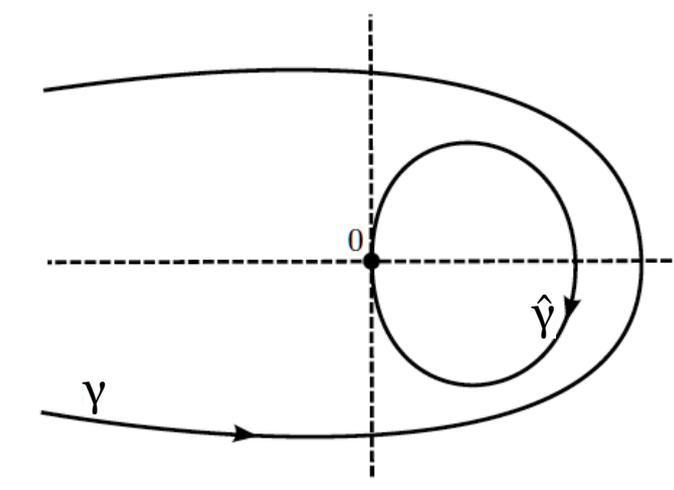}
\caption{The curves appearing in the definition of the Generalized Bessel kernel.}
\label{ArnoGenBes}
\end{figure}

where $\tau \in \mathbb{R}$ is a fixed parameter, the contour $\hat \gamma$ is a closed loop in the right half-plane tangent to the origin and oriented clockwise, while the contour $\gamma$ is an unbounded loop oriented counterclockwise and encircling $\hat \gamma$; the logarithmic cut lies on $\mathbb{R}_-$ (see \figurename \ \ref{ArnoGenBes}).

\begin{oss}
The curve setting is equivalent to the curve setting appearing in the definition of the Bessel kernel (see \cite{Me}). 

Moreover, the phase appearing in the exponential (\ref{phaseGEN}) resembles the Bessel kernel one $\psi(z,t):= zt -\frac{1}{4t}$ with an extra term which introduces a higher singularity at $0$.
\end{oss}

Our interest is focused on the gap probability of such operator restricted to a collection of intervals $I$, i.e. the quantity
\begin{gather}
\det (\operatorname{Id} - K^{GEN}\chi_I)
\end{gather}
with $\chi_I$ the characteristic function of the Borel set $I$.
 
\begin{oss}
 Let's consider a multi-interval $I:= \bigcup_{k=1}^N [a_{2k-1},a_{2k}]$. 
 Given $K^{GEN}_a :=  \chi_{[0,a]}(x) K^{GEN}(x,y)$, then we have
\begin{gather}
\chi_I (x) \cdot K^{GEN} (x,y) 
= \sum_{j=1}^{2N} (-1)^j K^{GEN}_{a_j}(x,y)
\end{gather}
\end{oss}

\begin{oss} The Generalized Bessel operator is not trace class at infinity; indeed, the kernel $K^{GEN}(x,x;\tau)$ is not integrable in a neighbourhood of $\infty$.
\end{oss}

The first step in our study is to establish a relation between the Generalized Bessel operator and a suitable integrable operator in the sense of Its-Izergin-Korepin-Slavnov (IIKS; see \cite{IIKS}).

\begin{thm}\label{identityFD}
Given a collection of (disjoint) intervals $I:= \bigcup_{k=1}^N [a_{2k-1},a_{2k}]$, the following identity between Fredholm determinats holds
\begin{equation}
\det \left( \operatorname{Id} - \chi_I K^{GEN} \right) = \det \left( \operatorname{Id} - \mathbb{K}^{GEN} \right)
\end{equation}
where $\mathbb{K}^{GEN}$ is an IIKS integrable operator acting on $L^2(\gamma \cup \hat \gamma)$ with kernel
\begin{align}
&\mathbb{K}^{GEN} (t,s) 
= \frac{\vec{f}\,^T (t) \cdot \vec{g}(s)}{t-s} \label{Kcritfg} \\
&\vec{f}(t) = \frac{1}{2\pi i } \left[\begin{array}{c}1 \\ 0 \\ \vdots \\ \vdots \\ 0  \end{array}\right] \chi_{\hat \gamma}(t) +  \frac{1}{2\pi i } \left[\begin{array}{c} 0 \\ e^{ \frac{ta_1}{2}} \\ e^{t\left(a_2 - \frac{a_1}{2} \right)} \\ \vdots \\ e^{t\left(a_{2N} - \frac{a_1}{2} \right)}  \end{array}\right] \chi_{\gamma}(t)  \label{Kcritf}\\
&\vec{g}(s) = \left[\begin{array}{c} 0 \\ - e^{ -a_1s - \frac{\tau}{s} + \frac{1}{2s^2} } s^\nu \\ e^{-a_2s - \frac{\tau}{s} + \frac{1}{2s^2} } s^\nu \\ \vdots \\ (-1)^{2N} e^{ -a_{2N}s - \frac{\tau}{s} + \frac{1}{2s^2} } s^\nu  \end{array}\right]\chi_{\hat \gamma}(s) + \left[ \begin{array}{c}  e^{ \frac{sa_1}{2} + \frac{\tau}{s} - \frac{1}{2s^2} } s^{-\nu} \\ 0 \\ \vdots \\ \vdots \\ 0 \end{array}\right] \chi_{\gamma} (s) \label{Kcritg}
\end{align}
\end{thm}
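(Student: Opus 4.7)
The plan is to follow the IIKS reduction strategy used in \cite{Me, MeM}: realize $\chi_I K^{GEN}$ on $L^2(\mathbb{R}_+)$ as a composition $FG$ whose Sylvester swap $GF$, after a multiplicative similarity, agrees with a block product built from $\mathbb{K}^{GEN}$ on $L^2(\gamma)$, whose Fredholm determinant in turn equals $\det(\operatorname{Id}-\mathbb{K}^{GEN})$ by the block-matrix formula.

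Concretely, I would introduce the operators
\begin{gather*}
F : L^2(\gamma) \to L^2(\mathbb{R}_+), \qquad F(x,t) := \int_{\hat\gamma} \frac{ds}{2\pi i}\,\frac{s^\nu e^{-xs-\tau/s+1/(2s^2)}}{s-t},\\
G : L^2(\mathbb{R}_+) \to L^2(\gamma), \qquad G(t,y) := \frac{\chi_I(y)}{2\pi i}\, t^{-\nu} e^{yt+\tau/t-1/(2t^2)}.
\end{gather*}
A Fubini computation yields $FG = K^{GEN}\chi_I$, so that $\det(\operatorname{Id}-\chi_I K^{GEN}) = \det(\operatorname{Id}-FG) = \det(\operatorname{Id}-GF)$ by cyclicity and the Sylvester identity. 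Using the preceding remark's identity $\int_I e^{y(t-s)}\,dy = (t-s)^{-1}\sum_{j=1}^{2N}(-1)^j e^{a_j(t-s)}$, a direct calculation of $GF$ produces an explicit single-contour kernel on $L^2(\gamma)$.

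On the IIKS side, the orthogonality $\vec f^T(z)\vec g(z)=0$---immediate from the fact that $\vec f$ and $\vec g$ have disjoint contour-supports component by component in \eqref{Kcritf}--\eqref{Kcritg}---implies that $\mathbb{K}^{GEN}$ is block-off-diagonal on $L^2(\gamma)\oplus L^2(\hat\gamma)$. The Schur complement formula then gives $\det(\operatorname{Id}-\mathbb{K}^{GEN}) = \det(\operatorname{Id}-\mathbb{K}_{\gamma\hat\gamma}\mathbb{K}_{\hat\gamma\gamma})$ on $L^2(\gamma)$. Computing this block product reproduces the same integrand as $GF$, multiplied by the ratio $h(t')/h(t)$ with $h(t) := e^{ta_1/2 + \tau/t - 1/(2t^2)}t^{-\nu}$. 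Hence $GF$ and $\mathbb{K}_{\gamma\hat\gamma}\mathbb{K}_{\hat\gamma\gamma}$ are conjugate via multiplication by $h$, their Fredholm determinants coincide, and the chain of equalities closes.

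The main technical obstacle is the non-trace-class behaviour of $K^{GEN}$ at infinity flagged in Remark 3: the Sylvester identity and the Fubini exchanges have to be justified in an appropriate operator class. I would handle this by truncating to $L^2([0,R])$ and sending $R\to\infty$, exploiting the compact support of $\chi_I$ and the decay of $e^{yt}$ on $\gamma$ (where $\operatorname{Re}(t)\to-\infty$ at infinity) to get uniform Hilbert--Schmidt bounds for the truncated $F$ and $G$, together with the localized behaviour of $s^\nu e^{-\tau/s+1/(2s^2)}$ near the essential singularity at the origin of $\hat\gamma$. The remaining bookkeeping---matching the signs $(-1)^j$ against the sum over the interval endpoints, and identifying the symmetric shift $e^{\pm a_1 t/2}$ with the multiplicative factor $h$---is routine once the factorization above is in place.
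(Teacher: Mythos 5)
Your proposal is correct and follows essentially the same route as the paper: both factor the operator through the pair of contours $\gamma$, $\hat\gamma$, split the exponential weight $e^{\pm a_1 z/2}$ between the two factors to make them separately Hilbert--Schmidt, and invoke the block-determinant identity relating $\det(\operatorname{Id}-\mathcal{B}\mathcal{A})$ to the determinant of the off-diagonal $2\times 2$ block operator. The only organizational difference is that the paper first conjugates by a Fourier transform to transport $\chi_I K^{GEN}$ onto $L^2(i\mathbb{R}+\epsilon)\simeq L^2(\gamma)$ (via a contour-integral representation of $\chi_{[0,a_j]}$) and then factors through $L^2(\hat\gamma)$, whereas you perform the cyclic swap $\det(\operatorname{Id}-FG)=\det(\operatorname{Id}-GF)$ directly and match the result against the Schur complement of $\mathbb{K}^{GEN}$ --- the same identities applied in a different order.
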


\begin{proof}
Since the preliminary calculations are linear, we will start working on the single term $K^{GEN}_{a_j}$ and we will later sum them up over the $j = 1, \ldots, 2N$.


\begin{gather}
K^{GEN}_{a_j} := \chi_{[0,a_j]}(x) K^{GEN}(x,y;\tau) \nonumber \\
= \int_{i\mathbb{R} + \epsilon} \frac{d\xi}{2\pi i} \frac{e^{\xi(a_j-x)}}{\xi-s} \int_{\gamma \times \hat \gamma} \frac{dt \, ds}{(2\pi i)^2 } \, \frac{e^{-a_j s - \frac{\tau}{s} + \frac{1}{2s^2} + yt + \frac{\tau}{t} - \frac{1}{2t^2}}}{s-t} \left(\frac{s}{t} \right)^\nu \nonumber \\
= \int_{i\mathbb{R} + \epsilon} \frac{d\xi}{2\pi i} e^{-x \xi } \int_{i\mathbb{R} + \epsilon} \frac{dt}{2\pi i} e^{yt}  \int_{\hat \gamma} \frac{ds}{2\pi i } \, \frac{e^{\xi a_j + \frac{\tau}{t} - \frac{1}{2t^2} -a_js - \frac{\tau}{s} + \frac{1}{2s^2} }}{(\xi -s)(s-t)} \left(\frac{s}{t} \right)^\nu 
\end{gather}
where we continuously deformed the contour $\gamma$ into a suitably translated imaginary axis. 

Introducing the following Fourier transform operators
\begin{equation}
\renewcommand\arraystretch{2}
\begin{array}{c|c}
\mathcal{F}: L^2(\mathbb{R}) \rightarrow L^2(i\mathbb{R}+\epsilon) & \mathcal{F}^{-1}: L^2(i\mathbb{R}+\epsilon) \rightarrow L^2(\mathbb{R}) \\
f(x) \mapsto \frac{1}{\sqrt{2\pi i}} \int_\mathbb{R}  f(x) e^{\xi x} dx & h(\xi) \mapsto \frac{1}{\sqrt{2\pi i}} \int_{i\mathbb{R}+\epsilon} h(\xi) e^{-\xi x}d\xi
\end{array}\label{Fourier}
\end{equation} 
we can claim that 
\begin{equation}
K^{GEN} = \mathcal{F}^{-1} \circ \mathcal{K}^{GEN} \circ \mathcal{F},
\end{equation} 
$\mathcal{K}^{GEN}:= \sum_j (-1)^j \mathcal{K}^{GEN}_{a_j}$ being an operator acting on $L^2(i\mathbb{R}+\epsilon)$ with kernels
\begin{equation}
\mathcal{K}^{GEN}_{a_j} (\xi, t;\tau) = \int_{\hat \gamma} \frac{ds}{2\pi i } \, \frac{e^{\xi a_j + \frac{\tau}{t} - \frac{1}{2t^2} -a_j s - \frac{\tau}{s} + \frac{1}{2s^2} }}{(\xi -s)(s-t)} \left(\frac{s}{t} \right)^\nu
\end{equation}
$\forall \, j=1, \ldots, 2N$, $\xi, t \in i\mathbb{R} + \epsilon$.


In order to ensure the convergence of the kernel, we conjugate it with the function $f(z):= e^{\frac{a_1 z}{2}}$ and, with abuse of notation, we call the resulting kernels $\mathcal{K}^{GEN}$ and $\mathcal{K}^{GEN}_{a_j}$ as well.
\begin{gather}
e^{\frac{a_1t}{2} -\frac{a_1\xi}{2}} \mathcal{K}^{GEN} (\xi, t;\tau)=\sum_{j=1}^{2N} (-1)^j \int_{\hat \gamma} \frac{ds}{2\pi i } \, \frac{e^{\xi \left( a_j - \frac{ a_1}{2}\right)+\frac{t a_1}{2} + \frac{\tau}{t} - \frac{1}{2t^2} -a_js - \frac{\tau}{s} + \frac{1}{2s^2} }}{(\xi -s)(s-t)} \left(\frac{s}{t} \right)^\nu 
\end{gather}
\begin{oss}
We recall that Fredholm determinants are invariant under conjugation by bounded invertible operators.
\end{oss}
We continuously deform the translated imaginary axis $i\mathbb{R}+\epsilon$ into its original shape $\gamma$; note that $a_j - \frac{a_1}{2} > 0$, $\forall j=1,\ldots, 2N$.  It can be easily shown that the operator $\mathcal{K}^{GEN}_{a_j}$ is the composition of two operators for every $j=1, \ldots, 2N$; moreover, it is trace-class.

\begin{lemma}\label{lemma5}
The operators $\mathcal{K}^{GEN}_{a_j}$ are trace-class operators, $\forall \, j=1,\ldots, 2N$, and the following decomposition holds $\mathcal{K}^{GEN}_{a_j} = \mathcal{B}_1 \circ \mathcal{A}_{j,1}$, with 
\begin{equation}
\renewcommand\arraystretch{2}
\begin{array}{c|c}
\mathcal{A}_{j,1}:  L^2(\gamma) \rightarrow  L^2(\hat \gamma) & \mathcal{B}_{1}: L^2(\hat \gamma) \rightarrow L^2(\gamma) \\
h(t) \mapsto  s^\nu e^{-a_js -\frac{\tau}{s} + \frac{1}{2s^2}}  \int_{\gamma} \frac{e^{t\left( a_j - \frac{a_1}{2}\right)}}{t-s} h(t) \, \frac{dt}{2\pi i} & f(s) \mapsto t^{-\nu}e^{\frac{ta_1}{2} + \frac{\tau}{t} - \frac{1}{2\tau^2}} \int_{\hat \gamma} \frac{f(s)}{s-t}   \, \frac{ds}{2\pi i}
\end{array}
\end{equation}
$\mathcal{A}_{j,1}$ and $\mathcal{B}_1$ are trace-class operators themselves.
\end{lemma}


\begin{proof}
We introduce an additional translated imaginary axis $i\mathbb{R}+\delta$ ($\delta >0$), not intersecting with $\gamma$ and $\hat \gamma$, and we decompose $\mathcal{A}_{j,1}$ and $\mathcal{B}_1$ in the following way: $\mathcal{A}_{j,1} = \mathcal{O}_{j,2} \circ \mathcal{O}_{j,1}$ and $\mathcal{B}_1 = \mathcal{P}_2 \circ \mathcal{P}_1$ with
\begin{equation}
\renewcommand\arraystretch{2}
\begin{array}{c|c}
\mathcal{O}_{j,1}: L^2(\gamma) \rightarrow L^2(i\mathbb{R}+\delta) & \mathcal{O}_{j,2} :L^2(i\mathbb{R}+\delta) \rightarrow L^2(\hat \gamma) \\
\displaystyle f(\xi) \mapsto \int_{\gamma} \frac{d\xi}{2\pi i} e^{\xi\left( a_j - \frac{ a_1}{2}\right)} \frac{f(\xi)}{\xi -w}  & \displaystyle g(w) \mapsto s^\nu e^{-a_js - \frac{\tau}{s} + \frac{1}{2s^2}} \int_{i\mathbb{R} + \delta} \frac{dw}{2\pi i} \frac{g(w)}{w-s} 
\end{array}
\end{equation}
and
\begin{equation}
\renewcommand\arraystretch{2}
\begin{array}{c|c}
\mathcal{P}_1:L^2(\hat \gamma) \rightarrow L^2(i\mathbb{R}+\delta) & \mathcal{P}_2: L^2(i\mathbb{R}+\delta) \rightarrow L^2(\gamma) \\
\displaystyle f(s)\mapsto \int_{\hat \gamma} \frac{ds}{2\pi i} \frac{f(s)}{s-u}  & \displaystyle g(u) \mapsto  t^{-\nu} e^{\frac{ta_1}{2} + \frac{\tau}{t} - \frac{1}{2t^2}} \int_{i\mathbb{R} + \delta} \frac{du}{2\pi i} \frac{g(u)}{u-t}
\end{array}
\end{equation}

All the kernels involved are of the form $K(z,w)$ with $z$ and $w$ on two disjoint curves, say $C_1$ and $C_2$. It is sufficient to check that $\iint_{C_1 \times C_2} |K(z,w)|^2 |dz| |dw| < \infty$ to ensure that the operator belongs to the class of Hilbert-Schmidt operators. This implies that $\left\{ \mathcal{A}_{j,1}\right\}_j$, $\mathcal{B}_1$ and $K^{GEN}_{a_j}$ are trace-class (for all $ j=1,\ldots, 2N$), since composition of two HS operators.
\end{proof}

Now we recall that any operator acting on a Hilbert space of the type $H = H_1 \oplus H_2 $ can be decomposed as a $2\times 2$ matrix of operators with $(i,j)$-entry given by an operator $H_j \rightarrow H_i$. Thus, we can perform a chain of equalities
\begin{gather}
\det \left( \operatorname{Id}_{L^2(\gamma)} - \mathcal{K}^{GEN} \right) = \det \left( \operatorname{Id}_{L^2(\gamma)} -  \sum_{j=1}^{2N} (-1)^j \mathcal{B}_1 \circ \mathcal{A}_{j,1} \right) \nonumber \\
= \det \left( \operatorname{Id}_{L^2(\gamma)} \otimes \operatorname{Id}_{L^2(\hat \gamma)} -  \left[ \begin{array}{c|c}
0 & \mathcal{B}_1 \\ \hline
\sum_{j=1}^{2N} (-1)^j \mathcal{A}_{j,1}&0
\end{array} \right] \right) = \det \left( \operatorname{Id}_{L^2(\gamma \cup \hat \gamma)} - \mathbb{K}^{GEN}\right)
\end{gather}
the second equality follows from the multiplication on the left by the matrix (with determinant $=1$)
\begin{equation}
\text{Id}_{L^2(\gamma) \otimes L^2(\hat \gamma)} + \left[ \begin{array}{c|c} 
0 & -\mathcal{B}_1 \\ \hline 
0 & 0  
\end{array} \right]
\end{equation}
and the operator $\mathbb{K}^{GEN}$ is an integrable operator with kernel as in the statement.
\end{proof}



\subsection{Riemann-Hilbert problem and $\tau$-function}
We can proceed now with building a Riemann-Hilbert problem associated to the integrable kernel we just found in Theorem \ref{identityFD}. 
This will allow us to find some explicit identities for its Fredholm determinant.

\begin{prop}
Given the integrable kernel (\ref{Kcritfg})-(\ref{Kcritg}), the correspondent RH-problem is the following: finding an $(2N+1)\times (2N+1)$ matrix $\Gamma$ such that it is analytic on $\mathbb{C} \backslash \Xi$ ($\Xi:= \gamma \cup \hat \gamma$) and
\begin{equation}
\left\{ \begin{array}{ll}
\Gamma_+(\lambda) =\Gamma_-(\lambda) M(\lambda) & \lambda \in \Xi \\
\Gamma(\lambda) = I + \mathcal{O}(1/\lambda) & \lambda \rightarrow \infty
\end{array}
\right.
\end{equation}
with jump matrix $M(\lambda) := I - J(\lambda)$,
\begin{gather}
J (\lambda) := 2\pi i \vec{f}(\lambda) \cdot \vec{g}^T(\lambda)  \nonumber \\
= \left[ \begin{array}{ccccc}
0 &  - e^{ \theta_{a_1}}\chi_{\hat \gamma} & e^{\theta_{a_2}} \chi_{\hat \gamma}& \ldots & (-1)^{2N} e^{ \theta_{a_{2N}}}\chi_{\hat \gamma} \\
e^{ -\theta_{a_1}} \chi_{\gamma} & 0 & 0 & \ldots & 0 \\
e^{ -\theta_{a_2}}\chi_\gamma &0&\ldots&& \vdots \\
\vdots&&&& \\
e^{ -\theta_{a_{2N}}} \chi_\gamma & 0 & 0 & \ldots & 0 
\end{array}\right] \\
\theta_{a_j} : = -a_j \lambda - \frac{\tau}{\lambda} + \frac{1}{2\lambda^2} + \nu \ln \lambda  \ \ \ \forall \, j=1, \ldots, 2N.
\end{gather}
\end{prop}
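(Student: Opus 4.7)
The plan is to invoke the standard Its--Izergin--Korepin--Slavnov correspondence of \cite{IIKS} (reviewed in \cite{JohnSasha}), which attaches to every integrable operator with kernel $K(x,y) = \vec f^{\,T}(x)\vec g(y)/(x-y)$ satisfying $\vec f^{\,T}(x)\cdot \vec g(x) \equiv 0$ a canonical matrix Riemann--Hilbert problem on the contour supporting $\vec f,\vec g$: the matrix $\Gamma$ is sought analytic off that contour, with multiplicative jump $M(\lambda) = I - 2\pi i \,\vec f(\lambda)\vec g^{\,T}(\lambda)$ and the normalization $\Gamma(\lambda)\to I$ at infinity. Since the vectors in (\ref{Kcritf})--(\ref{Kcritg}) have $2N+1$ components, this immediately produces a $(2N+1)\times(2N+1)$ RHP on $\Xi = \gamma\cup\hat\gamma$; all that remains is to verify the orthogonality hypothesis and to compute the rank-one update $2\pi i\,\vec f\,\vec g^{\,T}$ explicitly.

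First I would check $\vec f^{\,T}(\lambda)\cdot \vec g(\lambda)\equiv 0$ pointwise on $\Xi$: on $\hat\gamma$, $\vec f$ has a nonzero entry only in the first slot while the first entry of $\vec g$ vanishes, and on $\gamma$ the roles are exchanged, so the inner product is identically zero on both arcs and the IIKS construction applies.

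Second I would compute $2\pi i\,\vec f(\lambda)\vec g^{\,T}(\lambda)$ arc by arc. On $\hat\gamma$ only the first row of the outer product survives, and the $(1,j+1)$-entry is $(-1)^{j}\,e^{-a_j\lambda - \tau/\lambda + 1/(2\lambda^2)}\lambda^\nu = (-1)^j e^{\theta_{a_j}}$, reproducing the top row of $J(\lambda)$ in the statement. On $\gamma$ only the first column survives; combining the factor $e^{\lambda(a_j-a_1/2)}$ from $\vec f$ with $e^{\lambda a_1/2 + \tau/\lambda - 1/(2\lambda^2)}\lambda^{-\nu}$ from $\vec g$ cancels the shift by $a_1/2$ and yields $e^{-\theta_{a_j}}$, matching the first column of $J$.

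There is no genuine obstacle here: the proposition is essentially a dictionary translation of the IIKS recipe to the explicit data of Theorem \ref{identityFD}. The only mildly delicate points are the bookkeeping of the piecewise support of $\vec f$ and $\vec g$ (which forces $M$ to have the sparse two-block shape displayed in the statement on each arc) and the verification that on the unbounded contour $\gamma$ the exponentials $e^{\lambda(a_j - a_1/2)}$ decay as $|\lambda|\to\infty$, thanks to $a_j > a_1/2 > 0$ together with $\operatorname{Re}\lambda\to-\infty$ along the branches of $\gamma$; this ensures $J(\lambda)\to 0$ at infinity, so that the normalization $\Gamma(\lambda)=I+\mathcal{O}(1/\lambda)$ is internally consistent.
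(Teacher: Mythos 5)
Your proposal is correct and follows the same route as the paper, whose entire proof consists of the remark that one need only verify $I-J(\lambda)=I-2\pi i\,\vec f(\lambda)\cdot\vec g^{\,T}(\lambda)$; your arc-by-arc computation of the outer product is exactly that verification. The additional checks you include (the orthogonality $\vec f^{\,T}\cdot\vec g\equiv 0$ and the decay of $e^{\lambda(a_j-a_1/2)}$ on the unbounded branch of $\gamma$) are left implicit in the paper but are correct and worth making explicit.
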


\begin{proof} We simply need to verify that $I-J(\lambda) = I - 2\pi i \vec f(\lambda) \cdot \vec{g} (\lambda)^T$.
\end{proof}

It is easy to see that the jump matrix is conjugate to a matrix with (piece-wise) constant entries
$M(\lambda) = e^{T(\lambda)} \cdot M_0 \cdot e^{-T(\lambda)} $ with
\begin{gather}
T(\lambda, \vec a) = \text{diag} \left(T_0, T_1, \ldots, T_N \right) \\
T_0 = \frac{1}{N+1} \sum_{j=1}^{2N} \theta_{a_j} \ \ \ \  T_j = T_0 - \theta_{a_j} 
\end{gather}
with $\vec a$ the collection of all endpoints $\{a_j \}$.

Thus, considering the matrix $\Psi(\lambda, \vec{a}) := \Gamma (\lambda, \vec{a}) e^{T(\lambda, \vec{a})}$, $\Psi$ satisfies a RH-problem with constant jumps, thus it's (sectionally) a solution to a polynomial ODE.

Referring on the results stated in \cite{Misomonodromic} and \cite{MeM} and adapted to the case at hand, we can claim that
\begin{thm}\label{FDisomon}
For every parameter $\rho$, on which the Generalized Bessel operator may depend,
\begin{gather}
\partial_\rho \ln \det \left( \operatorname{Id} - \chi_I K^{GEN} \right) = \int_\Sigma \operatorname{Tr} \left( \Gamma^{-1}_{-}(\lambda) \Gamma'_{-}(\lambda) \Pi_{\partial_\rho}(\lambda) \right) \, \frac{d\lambda}{2\pi i}  \label{residue}
\end{gather}
where we recall that $I= \bigcup_k [a_{2k-1},a_{2k}]$ is the multi-interval, $\Xi = \gamma \cup \hat \gamma$ and $\Pi_{\partial_\rho}(\lambda) := \partial_\rho M (\lambda) M^{-1}(\lambda)$.
\end{thm}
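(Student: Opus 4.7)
The statement is a specialization to the Generalized Bessel setting of a well-known identity for integrable operators in the IIKS sense, so my plan is to reduce the Generalized Bessel Fredholm determinant to the integrable one via Theorem \ref{identityFD} and then invoke the general machinery of \cite{Misomonodromic,MeM}.

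First, by Theorem \ref{identityFD}, $\det(\operatorname{Id}-\chi_I K^{GEN}) = \det(\operatorname{Id}-\mathbb{K}^{GEN})$, so it suffices to differentiate the right-hand side. For an IIKS integrable operator with kernel $\mathbb{K}^{GEN}(t,s) = \vec{f}\,^T(t)\vec{g}(s)/(t-s)$, the standard computation gives
\begin{equation*}
\partial_\rho \ln\det(\operatorname{Id}-\mathbb{K}^{GEN}) = -\operatorname{Tr}\bigl((\operatorname{Id}-\mathbb{K}^{GEN})^{-1}\partial_\rho \mathbb{K}^{GEN}\bigr),
\end{equation*}
and the resolvent is itself integrable with ``dressed'' vectors $\vec F = \Gamma_-\vec f$, $\vec G = (\Gamma_-^T)^{-1}\vec g$, where $\Gamma$ solves the associated RH problem with jump $M = I-2\pi i \vec f\vec g^T$. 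This is precisely the content recalled in \cite{MeM, Misomonodromic}, which I would simply cite.

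Second, I would rewrite $\partial_\rho \mathbb{K}^{GEN}$ using the jump matrix. Because $\partial_\rho$ acts only through the exponential factors $\theta_{a_j}$, it is convenient to differentiate the relation $M = I - 2\pi i \vec f \vec g^T$: one obtains $\partial_\rho M \cdot M^{-1} = \Pi_{\partial_\rho}$, with $M$ factored as the conjugate of a constant matrix by $e^{T(\lambda,\vec a)}$ (as observed right after the proposition). Plugging the dressed resolvent into the trace and reducing the trace to a contour integral on $\Xi = \gamma\cup\hat\gamma$ via the Plemelj--Sokhotski jump relations, the local contribution at each point of $\Xi$ reduces to $\operatorname{Tr}(\Gamma_-^{-1}\Gamma_-'\,\Pi_{\partial_\rho})$.

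The main (and essentially only) technical obstacle is to justify the trace-class manipulations and the interchange of $\partial_\rho$ with $\operatorname{Tr}$: one needs that $\mathbb{K}^{GEN}$ and its $\rho$-derivative are trace class on $L^2(\gamma\cup\hat\gamma)$ and that $\Gamma_-$ is analytic in $\rho$ as a function into a suitable Banach algebra. The trace-class property is exactly what Lemma \ref{lemma5} provides (together with the block decomposition used in the proof of Theorem \ref{identityFD}); the smooth dependence on $\rho$ follows from the analyticity of $\vec f,\vec g$ in $\rho$ since $\rho$ enters only through the phases $\theta_{a_j}$. Once these analytic prerequisites are in place, the formula (\ref{residue}) follows verbatim from the general statement proved in \cite[\S\,appropriate section]{Misomonodromic} and adapted, as in \cite{MeM}, to the present contour configuration $\Xi$.
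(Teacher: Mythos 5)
Your overall strategy matches the paper's: the paper gives no standalone proof of this theorem, stating it as a consequence of the general IIKS/isomonodromic machinery of \cite{Misomonodromic} and \cite{MeM} after the reduction of Theorem \ref{identityFD}, and your reconstruction of that machinery (dressed resolvent $\vec F = \Gamma_-\vec f$, $\vec G = (\Gamma_-^T)^{-1}\vec g$, Plemelj reduction to a contour integral, trace-class justification via Lemma \ref{lemma5}) is the right skeleton.

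There is, however, one substantive omission. The general formula of \cite{Misomonodromic} is not the identity (\ref{residue}) itself but
\begin{equation*}
\omega(\partial_\rho) = \partial_\rho \ln\det\left(\operatorname{Id}-\chi_I K^{GEN}\right) - H(M),
\end{equation*}
where $\omega(\partial_\rho)$ is the contour integral on the right-hand side of (\ref{residue}) and $H(M)$ is a correction functional built from $\partial_\rho\vec f$, $\partial_\rho\vec g$, $\vec f\,'$, $\vec g\,'$ (the paper writes it out explicitly as (\ref{Hannoying}) in the proof of the multi-time analogue, Theorem \ref{TEOFREDHOLM}, and there states that the whole content of the proof is to check $H(M)\equiv 0$). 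Your claim that the formula ``follows verbatim'' from the general statement therefore skips the one verification that actually has to be carried out for this specific kernel: that $H(M)$ vanishes for the vectors (\ref{Kcritf})--(\ref{Kcritg}), using that $\rho$ enters only through the linear-in-$\lambda$ (and $1/\lambda$) phases $\theta_{a_j}$ and that $\vec f\,^T\vec g\equiv 0$. Without this check the asserted identity is not a citation but a nontrivial claim; with it, your argument closes and coincides with the paper's intended route.
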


Moreover, thanks to the Jimbo-Miwa-Ueno residue formula (see \cite{MeM}),
\begin{prop}
$\forall \, j=1,\ldots, 2N$ the Fredholm determinant satisfies
\begin{equation}
\partial_{a_j} \ln \det \left( \operatorname{Id} - \chi_I K^{GEN} \right)  = - \operatorname{res}_{\lambda=\infty} \operatorname{Tr}\left( \Gamma^{-1} \Gamma' \partial_{a_j}T    \right) =  \Gamma_{1; j+1, j+1} \label{residueJMU}
\end{equation}
i.e. the $(j+1, j+1)$ component of the residue matrix \,$\Gamma_1 = \lim_{\lambda \rightarrow \infty} \lambda \left(I - \Gamma(\lambda) \right) $.

As far as the parameter $\tau$ is concerned, the following result holds
\begin{equation}
\partial_{\tau} \ln \det \left( \operatorname{Id} - \chi_I K^{GEN} \right)  = \operatorname{res}_{\lambda=0} \operatorname{Tr}\left( \Gamma^{-1}\Gamma' \partial_{\tau}T    \right) = - \left( \tilde \Gamma_0^{-1} \tilde \Gamma_{1}\right)_{1,1}
\end{equation}
where $ \tilde\Gamma_0$ and $\tilde\Gamma_1$ are coefficients appearing in the asymptotic expansion of the matrix $\Gamma$ in a neighbourhood of zero.
\end{prop}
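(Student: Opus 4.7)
The plan is to apply Theorem \ref{FDisomon} specialized to $\rho = a_j$ and $\rho = \tau$, and then convert the contour integral over $\Xi$ into residues at the two isomonodromic singularities of the Lax system, namely $\lambda = \infty$ (an irregular singularity created by the polynomial part $-a_j\lambda$ of $\theta_{a_j}$) and $\lambda = 0$ (the irregular singularity created by $-\tau/\lambda + 1/(2\lambda^2) + \nu\ln\lambda$). This is the content of the Jimbo--Miwa--Ueno residue formula adapted to the present Riemann--Hilbert setting.

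First I would exploit the factorization $M(\lambda) = e^{T(\lambda)} M_0\, e^{-T(\lambda)}$: since $T$ is diagonal, $\partial_\rho T$ commutes with $e^{T}$, and hence
\[
\Pi_{\partial_\rho} = \partial_\rho M \cdot M^{-1} = [\partial_\rho T,\, M]\, M^{-1} = \partial_\rho T - M\, \partial_\rho T\, M^{-1}.
\]
Combining with the jump relation $\Gamma_+ = \Gamma_- M$, a standard manipulation (which gives $\Gamma_+^{-1}\Gamma'_+ = M^{-1}\Gamma_-^{-1}\Gamma'_- M + M^{-1}M'$) rewrites the integrand of \eqref{residue} as the jump $[\operatorname{Tr}(\Gamma^{-1}\Gamma'\partial_\rho T)]_- - [\operatorname{Tr}(\Gamma^{-1}\Gamma'\partial_\rho T)]_+$ modulo a correction of the form $\operatorname{Tr}(M^{-1}M'\partial_\rho T)$. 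The correction depends only on $M$ and $\partial_\rho T$ (no $\Gamma$), and using $M_0$ constant plus the fact that both $T$ and $\partial_\rho T$ are diagonal it reduces to a total $\lambda$-derivative, so it integrates to zero along the closed contours constituting $\Xi$.

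Next I would close the contour and use the residue theorem: $\Xi$ separates $0$ from $\infty$, so
\[
\int_\Xi \operatorname{Tr}\bigl(\Gamma_-^{-1}\Gamma'_-\Pi_{\partial_\rho}\bigr)\frac{d\lambda}{2\pi i}
= -\operatorname{res}_{\lambda=\infty}\operatorname{Tr}\bigl(\Gamma^{-1}\Gamma'\partial_\rho T\bigr) + \operatorname{res}_{\lambda=0}\operatorname{Tr}\bigl(\Gamma^{-1}\Gamma'\partial_\rho T\bigr),
\]
with appropriate signs coming from the orientation of $\gamma$ and $\hat\gamma$. For $\rho = a_j$ only $\partial_{a_j}T$ carries a singularity at $\infty$ (a linear term in $\lambda$ localized on the $(j{+}1)$-th diagonal entry after collecting the contribution from $T_0$), while for $\rho=\tau$ the diagonal $\partial_\tau T$ has a simple pole at $0$ concentrated on the top-left entry. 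Thus in the first case only the residue at $\infty$ survives, in the second only the residue at $0$ survives, yielding the two formulas in the statement.

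Finally, the extraction of the matrix coefficient is routine Laurent-series bookkeeping. Inserting $\Gamma(\lambda) = I + \Gamma_1/\lambda + O(\lambda^{-2})$ gives $\Gamma^{-1}\Gamma' = -\Gamma_1/\lambda^2 + O(\lambda^{-3})$; multiplying by the polynomial part of $\partial_{a_j}T$ and taking the trace selects the diagonal entry labeled by $j+1$, producing $\Gamma_{1;j+1,j+1}$. Near $\lambda=0$ one uses the local expansion $\Gamma(\lambda) = \tilde\Gamma_0 + \tilde\Gamma_1 \lambda + O(\lambda^2)$, so that $\Gamma^{-1}\Gamma'|_{\lambda=0} = \tilde\Gamma_0^{-1}\tilde\Gamma_1$, and the simple pole in $\partial_\tau T$ located in the $(1,1)$ slot extracts $-(\tilde\Gamma_0^{-1}\tilde\Gamma_1)_{1,1}$. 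The main obstacle in the writeup is bookkeeping: verifying that the ``correction'' trace really is an exact derivative on $\Xi$, keeping track of the orientation-induced signs when deforming $\gamma \cup \hat\gamma$ toward $0$ and $\infty$, and confirming that after the sum over all $2N$ contributions in the definition of $T_0$ the residue genuinely isolates the stated $(j{+}1,j{+}1)$ and $(1,1)$ entries.
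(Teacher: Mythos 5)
Your overall route coincides with the paper's: invoke the Jimbo--Miwa--Ueno residue formula to convert the integral of Theorem \ref{FDisomon} into residues at $\lambda=\infty$ and $\lambda=0$, then read off the relevant matrix entries from the Laurent expansions of $\Gamma$. The paper simply cites the residue formula from its references and devotes the proof to computing $\partial_{a_j}T$ and $\partial_\tau T$ and extracting the residues, whereas you spend most of your effort re-deriving the residue formula itself; that part is fine as a sketch.

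There is, however, one concrete gap at exactly the point you flag but do not resolve. Because $T_0=\frac{1}{2N+1}\sum_j\theta_{a_j}$ enters every diagonal entry of $T$, the derivatives are
\begin{equation*}
\partial_{a_j}T=\lambda\Bigl(\tfrac{1}{2N+1}I-E_{j+1,j+1}\Bigr),\qquad
\partial_\tau T=-\tfrac{1}{\lambda}\Bigl(E_{1,1}-\tfrac{1}{2N+1}I\Bigr),
\end{equation*}
so they are \emph{not} localized on a single diagonal slot: the residues come out as $\frac{\operatorname{Tr}\Gamma_1}{2N+1}-\Gamma_{1;j+1,j+1}$ and $\frac{\operatorname{Tr}(\tilde\Gamma_0^{-1}\tilde\Gamma_1)}{2N+1}-(\tilde\Gamma_0^{-1}\tilde\Gamma_1)_{1,1}$ respectively. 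Your writeup asserts that the trace ``selects the diagonal entry labeled by $j+1$'' and defers the verification, but the extra trace terms do not disappear by bookkeeping alone; they vanish because $\det\Gamma(\lambda)\equiv 1$ (the jump matrix has unit determinant since $\vec f^{\,T}\vec g=0$, and $\Gamma\to I$ at infinity), whence $\operatorname{Tr}(\Gamma^{-1}\Gamma')\equiv 0$ and in particular $\operatorname{Tr}\Gamma_1=\operatorname{Tr}(\tilde\Gamma_0^{-1}\tilde\Gamma_1)=0$. This one-line observation is the only substantive step in the paper's proof beyond the computation of $\partial_\rho T$, and it needs to be stated explicitly for your argument to close.
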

\begin{proof}
The phases $\theta_{a_j}$ are linear in $a_j$, exactly as in the Bessel kernel case (see \cite{Me}). 
\begin{equation}
 \partial_{a_j}T(\lambda, \vec a) =  \lambda \left( \frac{1}{2N+1}I - E_{j+1, j+1} \right)
\end{equation}
Then, we plug this expression into (\ref{residueJMU})
\begin{equation}
\operatorname{res}_{\lambda =\infty} \operatorname{Tr} \left(\Gamma^{-1}\Gamma' \partial_{a_j}T \right) = \frac{\operatorname{Tr} \Gamma_1}{2N+1}  - \Gamma_{1; j+1,j+1}
\end{equation}

Regarding the residue at zero, we recall the asymptotic expansion of $\Gamma \sim  \tilde \Gamma_0 + \lambda \tilde \Gamma_1 + \ldots$ near zero (see \cite{Wasow}) and we calculate 
\begin{gather}
\partial_\tau T  = - \frac{1}{\lambda} \left[ E_{1,1} -\frac{1}{2N+1}I  \right]
\end{gather}
thus
\begin{equation}
\operatorname{res}_{\lambda =0} \operatorname{Tr} \left(\Gamma^{-1}\Gamma' \partial_{\tau}T \right) = \frac{\operatorname{Tr} \left(\tilde \Gamma_0^{-1} \tilde \Gamma_1\right)}{2N+1}  - \left( \tilde \Gamma_0^{-1} \tilde \Gamma_1\right)_{1,1}
\end{equation}
The result follows from $\text{Tr} \Gamma_1 = \text{Tr}   \left(\tilde \Gamma_0^{-1} \tilde \Gamma_{1} \right)= 0$, since $\det \Gamma(\lambda) \equiv 1$.
\end{proof}

\subsection{The single-interval case}\label{1timeGENB0a}
In case we consider a single interval $I=[0,a]$, we are able to perform a deeper analysis on the gap probability of the Generalized Bessel operator and link it to an explicit Lax pair.

We will see that the Lax pair $A$ and $U$ will recall the Bessel Lax pair very closely (see \cite{Me}), except for the presence of an extra term for the spectral matrix $A$. Such term will introduce a higher order Poincar\'e rank at $\lambda=0$ as it will be clear in the following calculations. Moreover, thanks to the presence of the parameter $\tau$ other than the endpoint $a$, we can actually calculate a Lax ``triplet".

First of all, we reformulate Theorems \ref{identityFD} and \ref{FDisomon}, focusing on our present case.
\begin{thm}
Given $I=[0,a]$, the following equality between Fredholm determinants holds
\begin{gather}
\det \left( I_{L^2(\gamma) }- \chi_{[0,a]}K^{GEN}\right) 
= \det \left( I_{L^2(\gamma \cup\hat \gamma)} - \mathbb{K}^{GEN} \right)
\end{gather}
with $\mathbb{K}^{GEN}$ an IIKS integrable operator with kernel
\begin{align}
&\mathbb{K}^{GEN}_{\nu,\tau}(t,s) 
= \frac{\vec{f}^T(t) \cdot \vec{g}(s)}{t-s} \label{Kcritint} \\
&\vec{f}(t) = \frac{1}{2\pi i} \left[ \begin{array}{c} e^{\frac{ta}{2}}  \\ 0 \end{array} \right]\cdot \chi_{\gamma}(t) +  \frac{1}{2\pi i} \left[ \begin{array}{c} 0 \\ 1 \end{array} \right]\cdot \chi_{\hat \gamma}(t)  \\
&\vec{g}(s) = \left[ \begin{array}{c}  0 \\  s^{-\nu} e^{\frac{sa}{2} + \frac{\tau}{s} - \frac{1}{2s^2}} \end{array} \right]\cdot \chi_{\gamma}(s) +  \left[ \begin{array}{c} s^\nu e^{-sa - \frac{\tau}{s} + \frac{1}{2s^2}} \\ 0 \end{array} \right]\cdot \chi_{\hat \gamma}(s)
\end{align}
\end{thm}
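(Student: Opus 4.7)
The plan is to re-run the proof of Theorem \ref{identityFD} specialized to $I = [0,a]$ from the outset. Because the alternating sum in the decomposition
\[
\chi_I(x)K^{GEN}(x,y) = \sum_{j=1}^{2N}(-1)^j K^{GEN}_{a_j}(x,y)
\]
reduces in the single-interval case (with $a_1 = 0$, $a_2 = a$) to just the term $K^{GEN}_a$, the $(2N+1) \times (2N+1)$ block structure appearing in the multi-interval argument collapses at the outset to a genuine $2 \times 2$ block structure on $L^2(\gamma) \oplus L^2(\hat\gamma)$.

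Concretely, I would carry out the same three moves as in Theorem \ref{identityFD}. First, use the Fourier-type contour representation of $\chi_{[0,a]}(x) e^{-xs}$ followed by conjugation by the Fourier transform \eqref{Fourier} to pass from $\det(\operatorname{Id} - \chi_{[0,a]}K^{GEN})$ on $L^2(\mathbb R_+)$ to $\det(\operatorname{Id} - \mathcal K^{GEN}_a)$ on $L^2(i\mathbb R + \epsilon)$. Second, conjugate by the bounded invertible multiplier $e^{az/2}$ (chosen to symmetrize the exponential weights between the two arms of the contour, replacing the $e^{a_1 z/2}$ shift used in the multi-interval case) and deform $i\mathbb R + \epsilon$ into $\gamma$. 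The kernel then factors as
\[
\mathcal K^{GEN}_a = \mathcal B_1 \circ \mathcal A_1, \qquad \mathcal A_1 \colon L^2(\gamma) \to L^2(\hat\gamma), \quad \mathcal B_1 \colon L^2(\hat\gamma) \to L^2(\gamma),
\]
with $\mathcal A_1$ a Cauchy-type operator carrying the factors $s^\nu e^{-as-\tau/s+1/(2s^2)}$ (in $s \in \hat\gamma$) and $e^{ta/2}$ (in $t \in \gamma$), and $\mathcal B_1$ carrying $t^{-\nu} e^{ta/2+\tau/t-1/(2t^2)}$ (in $t \in \gamma$) paired against the constant $1$ in $\hat\gamma$. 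Trace-classness of both factors follows verbatim from Lemma \ref{lemma5}, by splitting each further through an auxiliary $L^2(i\mathbb R + \delta)$ and checking the square-integrability of the Cauchy kernels on pairs of disjoint contours.

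Third and finally, the same block-matrix identity used at the end of the proof of Theorem \ref{identityFD},
\[
\det(\operatorname{Id} - \mathcal B_1 \mathcal A_1) = \det\!\left(\operatorname{Id}_{L^2(\gamma \cup \hat\gamma)} - \begin{pmatrix} 0 & \mathcal B_1 \\ \mathcal A_1 & 0 \end{pmatrix}\right),
\]
(obtained by left-multiplication by a unit-determinant upper-triangular block operator) rewrites the Fredholm determinant as that of an IIKS integrable operator on $L^2(\gamma \cup \hat\gamma)$, whose kernel automatically takes the form $\vec f^{\,T}(t)\vec g(s)/(t-s)$. The only step requiring any care is the bookkeeping of the $e^{az/2}$ conjugation to verify that the factors $e^{ta/2}$, $s^\nu$ and $s^{-\nu}$ land in the components of $\vec f$ and $\vec g$ in exactly the positions claimed; no analytic input beyond Theorem \ref{identityFD} and Lemma \ref{lemma5} is needed.
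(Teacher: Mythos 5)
Your proposal is correct and follows essentially the same route as the paper, which states this theorem as a direct reformulation of Theorem \ref{identityFD} specialized to a single interval and gives no separate argument. Your re-derivation — keeping only the single term $K^{GEN}_a$, conjugating by $e^{az/2}$ so that the weights split as $e^{ta/2}$ on $\gamma$ and $s^{\nu}e^{-sa-\tau/s+1/(2s^2)}$ on $\hat\gamma$, invoking Lemma \ref{lemma5} for trace-classness, and applying the block-determinant identity — reproduces the stated $2\times 2$ IIKS kernel exactly (up to the same transposition of the off-diagonal blocks that the paper's own conventions for $\vec f,\vec g$ already employ, which leaves the Fredholm determinant unchanged).
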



The associated RH-problem reads as follows:
\begin{equation}
\left\{ \begin{array}{ll}
\Gamma_+(\lambda) = \Gamma_-(\lambda) M(\lambda) & \lambda \in \Xi:= \hat \gamma \cup \gamma \nonumber \\
\Gamma(\lambda) = I + \mathcal{O}(1/\lambda) & \lambda \rightarrow \infty
\end{array}
\right.
\end{equation}
with $\Gamma$ a $2\times 2$ matrix, analytic on analytic on the complex plane except on the collection of curves $\Xi$, along which the above jump condition is satisfied with  jump matrix $M(\lambda) := I - J(\lambda)$
\begin{gather}
M(\lambda) = \left[ \begin{array}{cc} 1 &  -e^{\lambda a + \frac{\tau}{\lambda}  - \frac{1}{2\lambda^2} -\nu \ln \lambda}   \chi_{\gamma}(\lambda)\\
-e^{-\lambda a - \frac{\tau}{\lambda} + \frac{1}{2\lambda^2} + \nu \ln \lambda}\chi_{\hat \gamma} (\lambda)& 1 \end{array} \right]  \nonumber \\
= e^{T_a(\lambda)} \cdot M_0 \cdot e^{-T_a(\lambda)}
\end{gather}
Thus the jump matrix $M$ is equivalent to a matrix with constant entries, via the conjugation $e^{T_a(\lambda)}$, $T_a(\lambda) = \frac{1}{2}\theta_a  \sigma_3$, where $\theta_a := - \lambda a - \frac{\tau}{\lambda} +  \frac{1}{2\lambda^2} + \nu\ln \lambda$ and $\sigma_3$ is the third Pauli matrix. This allows us to define the matrix $\Psi (\lambda):= \Gamma(\lambda) e^{T_a(\lambda)}$ which solves a RHP with constant jumps and is (sectionally) a solution to a polynomial ODE:


Applying Theorem \ref{FDisomon}, we get
\begin{thm}
\begin{gather}
\partial_\rho \ln \det (I - \chi_{[0,a]}K^{GEN}) =  \int_{\Xi} \operatorname{Tr} \left( \Gamma_-^{-1}(\lambda) \Gamma'_-(\lambda) \Pi_{\partial_\rho} (\lambda) \right) \frac{d\lambda}{2\pi i} \\
\Pi_\partial (\lambda) := \partial M(\lambda) M^{-1}(\lambda),  \ \ \ \Xi:= \gamma \cup \hat \gamma
\end{gather}
for every parameter $\rho$ on which the operator $K^{GEN}$ depends.
\end{thm}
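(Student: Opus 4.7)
The claim is a direct specialization of Theorem \ref{FDisomon} to the case of a single interval $I=[0,a]$, so my plan is essentially to check that the reformulated $2\times 2$ RH setup of Section \ref{1timeGENB0a} inherits all the hypotheses needed to invoke the general result. The route is the standard Malgrange / Bertola variational formula for the Fredholm determinant of an IIKS integrable operator, applied here with $\vec{f},\vec{g}$ as given in (\ref{Kcritint}).

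First, I would recall the identity $\det(\operatorname{Id}-\chi_{[0,a]}K^{GEN}) = \det(\operatorname{Id}-\mathbb{K}^{GEN})$ coming from the preceding theorem, so that differentiation of the Fredholm determinant can be carried out on the integrable-operator side where the RH machinery is available. By the general IIKS theory, if $\Gamma$ is the solution of the associated RH problem with jump $M(\lambda)=I-2\pi i\,\vec{f}(\lambda)\vec{g}^{T}(\lambda)$, then for any parameter $\rho$ entering $\vec{f}$ or $\vec{g}$ through $M$ one has
\begin{equation}
\partial_\rho \ln\det(\operatorname{Id}-\mathbb{K}^{GEN}) = \int_{\Xi}\operatorname{Tr}\bigl(\Gamma_{-}^{-1}(\lambda)\Gamma_{-}'(\lambda)\,\Pi_{\partial_\rho}(\lambda)\bigr)\,\frac{d\lambda}{2\pi i},
\end{equation}
with $\Pi_{\partial_\rho}(\lambda):=\partial_\rho M(\lambda)M^{-1}(\lambda)$; this is exactly the identity proved in Theorem \ref{FDisomon} in the general $(2N+1)\times(2N+1)$ setting.

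Next I would check that nothing breaks when $N=1$ and $a_1=0$, so that the dimension of $\Gamma$ drops to $2$: the dependence on $\rho\in\{a,\tau,\nu\}$ enters only through the phase $\theta_a(\lambda) = -a\lambda - \tau/\lambda + 1/(2\lambda^{2}) + \nu\ln\lambda$ which appears in $M$ via the conjugation $M=e^{T_a}M_0 e^{-T_a}$ with $T_a=\tfrac12\theta_a\sigma_3$; differentiability in $\rho$ of the jump and the fact that $M_0$ is piecewise constant are immediate, so $\Pi_{\partial_\rho}(\lambda)$ is well defined on each component of $\Xi$. The trace-class decomposition of Lemma \ref{lemma5} specializes to this case (with $a_j=a$) without modification, so $\mathbb{K}^{GEN}$ is trace-class and the Fredholm determinant, together with its derivatives, are legitimate objects; the integrability of the integrand on $\Xi$ follows from the decay of $\Gamma_{-}-I$ at infinity and the local behaviour of $\Gamma$ near $\lambda=0$ controlled by the phase $\theta_a$.

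There is really no hard step here: once the general Theorem \ref{FDisomon} is in hand, the proof in the single-interval case reduces to quoting that theorem for the specific $\vec{f},\vec{g}$ of (\ref{Kcritint}) and the specific $M$ written above. The only mildly delicate point I would verify explicitly is that the contour deformation used to derive $\mathbb{K}^{GEN}$ remains valid uniformly in $\rho$ in a neighbourhood of the values at which we differentiate, so that $\partial_\rho$ and $\det$ commute; this is guaranteed by the analytic dependence of $M$ on $\rho$ and by the fact that $\Xi$ can be chosen independent of $\rho$.
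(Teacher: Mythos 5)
Your proposal is correct and matches the paper's own treatment: the paper gives no independent argument for this theorem, stating only that it follows by ``Applying Theorem \ref{FDisomon}'' (itself quoted from the general IIKS/Malgrange variational results of the cited references) to the single-interval kernel of Section \ref{1timeGENB0a}. Your additional checks --- that the $2\times 2$ reduction, the trace-class decomposition of Lemma \ref{lemma5}, and the smooth dependence of $M$ on $\rho$ all survive the specialization --- are exactly the verifications implicitly assumed by the paper.
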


In particular, thanks to the Jimbo-Miwa-Ueno residue formula, we have
\begin{align}
\partial_a \ln \det (I - \chi_{[0,a]}K^{GEN})  &=- \operatorname{res}_{\lambda =\infty} \ \operatorname{Tr} \left( \Gamma^{-1} \Gamma' \partial_aT_a \right) \\
\partial_\tau \ln \det (I - \chi_{[0,a]}K^{GEN})  &= \operatorname{res}_{ \lambda = 0} \ \operatorname{Tr} \left( \Gamma^{-1} \Gamma' \partial_\tau T_a \right)
\end{align}

\begin{prop}The Fredholm determinant of the Generalized Bessel operator satisfies the following relations
\begin{align}
\partial_a \ln \det (\operatorname{Id} - \chi_{[0,a]}K^{GEN}) &=  \Gamma_{1;2,2} \label{O122} \\
\partial_\tau \ln \det (\operatorname{Id} - \chi_{[0,a]}K^{GEN}) &= \left( \tilde \Gamma_0^{-1} \tilde \Gamma_1 \right)_{2,2}
\end{align}
with $\Gamma_{1;2,2}$ the $(2,2)$-entry of the residue matrix $\Gamma_1$ at $\infty$, while the $ \tilde \Gamma_j$'s appear in the asymptotic expansion of $\Gamma$ near zero.
\end{prop}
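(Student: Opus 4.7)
The plan is to specialize the general residue formulas of the previous theorem to the $2\times 2$ single-interval setting, in which only one exponential phase appears and $T_a(\lambda)=\tfrac12\theta_a\sigma_3$ has poles only at $\lambda=\infty$ (the linear and logarithmic terms) and at $\lambda=0$ (the $1/\lambda$ and $1/\lambda^2$ terms). The two parameters $a$ and $\tau$ couple to these two different singularities, so the derivatives of the Fredholm determinant land on disjoint singularities and can be computed independently.

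First I would compute the two relevant derivatives of the conjugating matrix:
\begin{equation}
\partial_a T_a(\lambda) = -\tfrac{\lambda}{2}\,\sigma_3, \qquad \partial_\tau T_a(\lambda) = -\tfrac{1}{2\lambda}\,\sigma_3.
\end{equation}
The first is polynomial (so the Jimbo--Miwa--Ueno pairing picks up only the residue at $\infty$) and the second is a simple pole at $\lambda=0$ (so only the residue at $0$ contributes). Inserting these into the formulas
\begin{align}
\partial_a \ln\det(\operatorname{Id}-\chi_{[0,a]}K^{GEN}) &= -\operatorname{res}_{\lambda=\infty}\operatorname{Tr}(\Gamma^{-1}\Gamma'\,\partial_a T_a), \\
\partial_\tau \ln\det(\operatorname{Id}-\chi_{[0,a]}K^{GEN}) &= \operatorname{res}_{\lambda=0}\operatorname{Tr}(\Gamma^{-1}\Gamma'\,\partial_\tau T_a),
\end{align}
stated just above, reduces the problem to extracting the appropriate coefficient of a Laurent series.

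Next I would use the normalization $\Gamma(\lambda)=I+\Gamma_1/\lambda+O(\lambda^{-2})$ at $\infty$, which gives $\Gamma^{-1}\Gamma'=-\Gamma_1/\lambda^2+O(\lambda^{-3})$; plugging this in produces
\begin{equation}
\operatorname{Tr}(\Gamma^{-1}\Gamma'\,\partial_a T_a) = \tfrac{1}{2\lambda}\operatorname{Tr}(\Gamma_1\sigma_3)+O(\lambda^{-2}) = \tfrac{1}{2\lambda}\bigl(\Gamma_{1;1,1}-\Gamma_{1;2,2}\bigr)+O(\lambda^{-2}),
\end{equation}
so the residue at $\infty$ gives $\tfrac12(\Gamma_{1;1,1}-\Gamma_{1;2,2})$ (up to sign). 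Similarly, the expansion $\Gamma(\lambda)=\tilde\Gamma_0+\lambda\tilde\Gamma_1+O(\lambda^2)$ near the origin yields $\Gamma^{-1}\Gamma'=\tilde\Gamma_0^{-1}\tilde\Gamma_1+O(\lambda)$, and the $1/\lambda$ in $\partial_\tau T_a$ turns this into a simple pole whose residue at $0$ is a multiple of $(\tilde\Gamma_0^{-1}\tilde\Gamma_1)_{1,1}-(\tilde\Gamma_0^{-1}\tilde\Gamma_1)_{2,2}$.

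The final step is to collapse each of these differences of diagonal entries into a single entry by invoking $\det\Gamma(\lambda)\equiv 1$. Differentiating gives $\operatorname{Tr}(\Gamma^{-1}\Gamma')\equiv 0$, so in particular $\operatorname{Tr}\Gamma_1=0$ at $\infty$ and $\operatorname{Tr}(\tilde\Gamma_0^{-1}\tilde\Gamma_1)=0$ at $0$. These two identities let me rewrite $\Gamma_{1;1,1}-\Gamma_{1;2,2}=-2\,\Gamma_{1;2,2}$ and $(\tilde\Gamma_0^{-1}\tilde\Gamma_1)_{1,1}-(\tilde\Gamma_0^{-1}\tilde\Gamma_1)_{2,2}=-2\,(\tilde\Gamma_0^{-1}\tilde\Gamma_1)_{2,2}$, which delivers the two stated equalities. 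There is no real obstacle here beyond keeping track of the signs produced by the conventions $\operatorname{res}_{\lambda=\infty}(1/\lambda)=-1$ and the factor $-\tfrac12$ in $T_a$; the determinantal identity $\det\Gamma\equiv 1$ is the one substantive input, and it follows from the fact that each jump matrix has unit determinant together with the normalization at $\infty$.
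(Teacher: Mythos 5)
Your proposal is correct and follows essentially the same route as the paper: the paper proves the analogous multi-interval statement by exactly this computation (expand $\Gamma^{-1}\Gamma'$ at $\infty$ and at $0$, pair with $\partial_a T$ and $\partial_\tau T$, take residues, and use $\operatorname{Tr}\Gamma_1=\operatorname{Tr}(\tilde\Gamma_0^{-1}\tilde\Gamma_1)=0$ from $\det\Gamma\equiv 1$), and states the single-interval version without a separate proof. The only loose end is the "up to sign" in the $a$-derivative, which traces back to whether $T_a=\pm\tfrac12\theta_a\sigma_3$ (the paper's own sign convention for $T_a$ is not fully consistent with $M=e^{T_a}M_0e^{-T_a}$), so your final bookkeeping is as careful as the source permits.
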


We can now calculate the Lax ``triplet" associated to the RH-problem above:
\begin{align}
A&:= \partial_\lambda \Psi \cdot \Psi^{-1} =A_0 + \frac{A_{-1}}{\lambda}+ \frac{A_{-2}}{\lambda^2}+ \frac{A_{-3}}{\lambda^3}  \\
U&:= \partial_a \Psi \cdot \Psi^{-1} = U_0 + \lambda U_1 \\
V&:= \partial_\tau \Psi \cdot \Psi^{-1}  = V = V_0 + \frac{V_{-1}}{\lambda}
\end{align}
with coefficients
\begin{align}
A_{0} &= \frac{a}{2}\sigma_3,  \ \ \ A_{-1}  = - \frac{\nu}{2}\sigma_3 + \frac{a}{2}[\Gamma_1, \sigma_3] \\
A_{-2} &=  - \frac{a}{2} [\Gamma_1,\sigma_3\Gamma_1] + \frac{a}{2} [\Gamma_2,\sigma_3]  - \frac{\nu}{2}[\Gamma_1,\sigma_3] - \frac{\tau}{2}\sigma_3  - \Gamma_1 \\
A_{-3}&= \Gamma_1^2 - 2\Gamma_2 +\frac{a}{2}[\sigma_3\Gamma_2,\Gamma_1] + \frac{a}{2}[\Gamma_1,\sigma_{3}\Gamma_1^2] + \frac{a}{2}[\sigma_3\Gamma_1,\Gamma_2] + \frac{a}{2}[\Gamma_3,\sigma_3]   \nonumber \\ 
&+\frac{\nu}{2}\sigma_3\Gamma_2   + \frac{\nu}{2} [\Gamma_1,\sigma_3\Gamma_1] + \frac{\tau}{2} \sigma_3\Gamma_1 + \frac{1}{2}\sigma_3  \\
U_0&=  \frac{1}{2} [\Gamma_1,\sigma_3], \ \ \ U_1  = \frac{1}{2}\sigma_3 \\
V_0 &= 0 , \ \ \ V_{-1} = \frac{1}{2}\sigma_3
\end{align}

We point out that $\lambda =0$ is an irregular point of Poincar\'e rank $2$. The behaviour at zero shows a higher order rank with respect to the Lax pair for the Bessel operator \cite{Me}, where the point $\lambda=0$ was of rank $1$. 

Moreover, the matrix $U$ is the same as the one appearing in the Bessel Lax pair (in the non-rescaled case, see \cite{Me}). 

The expression of the Lax pair $A$ and $U$ suggests that their compatibility equation will eventually lead to a higher order ODE belonging to the Painlev\'e III hierarchy.



\section{Multi-time Generalized Bessel}\label{ntimeGENB}



The multi-time Generalized Bessel operator on $L^2(\mathbb{R}_+)$ with times $\tau_1<\ldots< \tau_n$ is defined through a matrix kernel with entries $K^{GEN}:= H_{ij} + \chi_{i<j} P_{\Delta_{ij}}$
\begin{align}
H_{ij}(x,y)& = - 4 \left(  \frac{y}{x}\right)^\nu \int_{\gamma \times \hat \gamma} \frac{dt \, ds}{(2\pi i)^2} \frac{e^{ - \frac{1}{2}\left(\tau - \frac{1}{t} \right)^2 + xt + \frac{1}{2}\left(\tau - \frac{1}{s}+ \Delta_{ji}\right)^2 - ys}}{\left(s-t+ \Delta_{ji} ts\right)} \left(\frac{s}{t}\right)^\nu \\
P_{\Delta_{ij}} (x,y) &= \left( \frac{y}{x} \right)^{\frac{\nu}{2}} \frac{1}{\Delta_{ji}} e^{-\frac{x+y}{4\Delta_{ji}}} I_\nu\left( \frac{\sqrt{xy}}{2\Delta_{ji}} \right) \nonumber \\
&= - \left( \frac{y}{x}\right)^{\nu} \frac{1}{\Delta_{ji}}   \int_\gamma e^{\frac{x}{4\Delta_{ji}}(t-1) + \frac{y}{4\Delta_{ji} }\left( \frac{1}{t} -1 \right)} t^{-\nu -1} \frac{dt}{2\pi i}
\end{align}
the curve $\gamma$ is the same one as in the single-time Extended Bessel kernel (a contour that winds around zero counterclockwise an extends to $-\infty$) and $\hat\gamma := \frac{1}{\gamma}$; $\Delta_{ji}:= \tau_j-\tau_i>0$.
\begin{oss}
The matrix $P_{\Delta_{ij}}$ is strictly upper triangular. 
\end{oss}

\begin{oss}
The above definition of the multi-time kernel is the one given by Delvaux and Veto (\cite{Balint}). We preferred to use this one because the study of the gap probability with this expression involves less complicated calculations than with our equivalent version (see Appendix). 
\end{oss}

As in the single-time case, we are again interested in the gap probability of the operator restricted to a collection intervals $I_j$ at each time $\tau_j$ ($\forall \, j$), i.e. 
\begin{equation}
\det\left( \text{Id}_{L^2(\mathbb{R}_+)} - K^{GEN} \chi_{\mathcal{I}} \right)
\end{equation}
where $\chi_{\mathcal{I}} = \text{diag} \left( \chi_{I_1}, \ldots, \chi_{I_n} \right)$ is a diagonal matrix of characteristic functions  and
\begin{equation*}
I_j:= [a_1^{(j)}, a_2^{(j)}] \cup \ldots \cup [a_{2k_j-1}^{(j)}, a_{2k_j}^{(j)}] \ \ \ \forall \, j=1,\ldots,n.
\end{equation*}

\begin{oss}
The multi-time Bessel operator fails to be trace-class on infinite intervals.
\end{oss}

For the sake of clarity, we will focus on the simple case $I_j= [0, a^{(j)}]$, $\forall \, j$.  The general case follows the same guidelines described below; the only difficulties are mostly technical, due to heavy notation, and not theoretical.

As in the single-time case, we start by establishing a link between the multi-time Generalized Bessel operator and a suitable IIKS operator, which we will examine deeper in the next subsection.
\begin{thm}
The following identity between Fredholm determinants holds
\begin{equation}
\det \left( \operatorname{Id} -K^{GEN}\chi_{\mathcal{I}} \right) = \det \left( \operatorname{Id} - \mathbb{K}^{GEN}  \right)
\end{equation}
with $\chi_{\mathcal{I}} = \operatorname{diag} \, \left(\chi_{I_1}, \ldots, \chi_{I_n} \right)$
the characteristic matrix of the collection of intervals. The operator $\mathbb{K}_B$ is an integrable operator acting on the Hilbert space
\begin{equation}
H: =L^2\left(\gamma \cup \bigcup_{k=1}^n \gamma_{-k} ,\mathbb{C}^n\right) \sim  L^2\left(\bigcup_{k=1}^n\gamma_{-k}, \mathbb{C}^n\right) \oplus  L^2(\gamma, \mathbb{C}^n),
\end{equation}
with $\gamma_{-k} := \frac{1}{\gamma}-4\tau_k$.

Its kernel is a $2n \times 2n$ matrix  of the form
\begin{align}
&\mathbb{K}^{GEN} (v,\xi) = \frac{\textbf{f}(v)^T \cdot \textbf{g}(\xi)}{v-\xi}  \label{IIKSGENBesselmulti1} \\
&\textbf{f}(v)^T = \frac{1}{2\pi i} \left[ \begin{array}{c|c|c}
\operatorname{diag}\, \mathcal{N}(v) & 0 & 0 \\ \hline
0 & \operatorname{diag} \, \mathcal{M}(v) & \mathcal{A}(v)
\end{array} \right] \\
&\textbf{g}(\xi) = \left[ \begin{array}{c|c}
0 & \operatorname{diag}\, \mathcal{N}(\xi) \\ \hline
 \mathcal{M}(\xi) & 0 \\ \hline
0 & \mathcal{B}(\xi)
\end{array}\right]
\end{align}
where $\textbf{f}, \textbf{g}$ are $N \times 2n$ matrices, with $N = 2n + (n-1) = 3n-1$.
\begin{gather}
\operatorname{diag} \, \mathcal{N}(v) := \operatorname{diag}  \left[ -4 e^{-\frac{a^{(1)}}{v_1}}\chi_{\gamma}, \ldots, -4 e^{-\frac{a^{(n)}}{v_n}}\chi_{\gamma} \right] \\
\operatorname{diag} \, \mathcal{N}(\xi) := \operatorname{diag} \left[ e^{\frac{a^{(1)}}{\xi_1}}\chi_{\gamma_{-1}}, \ldots,  e^{-\frac{a^{(n)}}{\xi_n}}\chi_{\gamma_{-n}} \right] \\
 \operatorname{diag} \, \mathcal{M}(v) := \operatorname{diag} \left[ e^{- \frac{(v_{1,\tau})^2}{2}}v_1^\nu \chi_{\gamma_{-1}}, \ldots, e^{- \frac{(v_{n,\tau})^2}{2}}v_n^\nu \chi_{\gamma_{-n}}\right] \\
 \mathcal{M}(\xi) := \left[ \begin{array}{ccc}
 e^{  \frac{(\xi_{1,\tau})^2}{2}} \xi_1^{-\nu} \chi_{\gamma}  & \ldots & e^{  \frac{(\xi_{1,\tau})^2}{2}} \xi_n^{-\nu} \chi_{\gamma}  \\
 \vdots & & \vdots \\
 e^{  \frac{(\xi_{n,\tau})^2}{2}} \xi_1^{-\nu} \chi_{\gamma}  & \ldots & e^{  \frac{(\xi_{n,\tau})^2}{2}} \xi_n^{-\nu} \chi_{\gamma} 
 \end{array} \right]
\end{gather}
\begin{gather}
\mathcal{A}(v) =\nonumber \\
\left[ \begin{array}{ccccc}
 -4 e^{-\frac{a^{(2)}}{v_2}} \frac{v_1^\nu}{v_2^\nu}\chi_{\gamma_{-1}} & -4 e^{-\frac{a^{(3)}}{v_3}} \frac{v_1^\nu}{v_3^\nu}\chi_{\gamma_{-1}} & -4 e^{-\frac{a^{(4)}}{v_4}} \frac{v_1^\nu}{v_4^\nu}\chi_{\gamma_{-1}} & \ldots & -4 e^{-\frac{a^{(n)}}{v_n}} \frac{v_1^\nu}{v_n^\nu}\chi_{\gamma_{-1}} \\
 0 & -4 e^{-\frac{a^{(3)}}{v_3}} \frac{v_2^\nu}{v_3^\nu}\chi_{\gamma_{-2}} & -4 e^{-\frac{a^{(4)}}{v_4}} \frac{v_2^\nu}{v_4^\nu}\chi_{\gamma_{-2}} & \ldots & -4 e^{-\frac{a^{(n)}}{v_n}} \frac{v_2^\nu}{v_n^\nu}\chi_{\gamma_{-2}} \\
  & 0 & -4 e^{-\frac{a^{(4)}}{v_4}} \frac{v_3^\nu}{v_4^\nu}\chi_{\gamma_{-3}} & \ldots & \\
&& \vdots && \\ 
&&& 0 & -4 e^{-\frac{a^{(n)}}{v_n}} \frac{v_{n-1}^\nu}{v_n^\nu}\chi_{\gamma_{-{(n-1)}}} \\
&&&& 0
\end{array} \right] \\
\mathcal{B}(\xi) =
\left[ \begin{array}{cccccc}
0 & e^{\frac{a^{(2)}}{\xi_2}}\chi_{\gamma_{-2}} &&&& \\
& 0 & e^{\frac{a^{(3)}}{\xi_3}}\chi_{\gamma_{-3}} &&& \\
&& 0 & e^{\frac{a^{(4)}}{\xi_4}}\chi_{\gamma_{-4}} && \\
&&&& \ddots & \\
&&&& 0 & e^{\frac{a^{(n)}}{\xi_n}}\chi_{\gamma_{-n}} 
\end{array} \right] \label{IIKSGENBesselmulti2}
\end{gather}
$\zeta_k := \zeta + 4\tau_k$, $\zeta_{k,\tau}:= \zeta+4\tau_k-\tau$ ($\zeta = v, \xi$, $k=1,\ldots,n$).
\end{thm}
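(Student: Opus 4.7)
The strategy is to lift the argument of Theorem \ref{identityFD} to the matrix-valued multi-time setting. The chief novelty is the presence of the strictly upper-triangular contribution $P_{\Delta_{ij}}$, which accounts for the extra $n-1$ dimensions in $N = 3n-1$ beyond the naive $2n$ one would get by applying the single-time template blockwise.

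First, I would treat each scalar block $H_{ij}$ by inserting an integral representation for $\chi_{[0,a^{(j)}]}$ and then writing the resulting triple integral as the composition of two operators, exactly as in Lemma \ref{lemma5}: one operator of the form $\mathcal{A}_j : L^2(\gamma) \to L^2(\gamma_{-j})$ built from the $t$-variable and the exponential containing $a^{(j)}$, and one operator $\mathcal{B}_i : L^2(\gamma_{-i}) \to L^2(\gamma)$ built from the $s$-variable and the phase $e^{(\xi_{i,\tau})^2/2}$. The shifted contour $\gamma_{-k} = 1/\gamma - 4\tau_k$ arises naturally from the Gaussian-type phase $-(\tau - 1/s)^2/2$ in $H_{ij}$, once one changes variable and translates to cancel the quadratic piece. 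Checking that each factor is Hilbert--Schmidt (and hence each $H_{ij}$ is trace class) is routine since all kernels involved are evaluated on pairs of disjoint contours with good exponential decay.

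Next, I would incorporate the $P_{\Delta_{ij}}$ terms for $i<j$. Using the single-contour representation
\begin{equation*}
P_{\Delta_{ij}}(x,y) = -\bigl(y/x\bigr)^{\nu}\frac{1}{\Delta_{ji}} \int_\gamma e^{\frac{x}{4\Delta_{ji}}(t-1) + \frac{y}{4\Delta_{ji}}(1/t-1)} t^{-\nu-1}\,\frac{dt}{2\pi i},
\end{equation*}
one sees that each $P_{\Delta_{ij}}$ similarly factorizes, but through only one intermediate contour space. This is precisely the role of the extra rectangular blocks $\mathcal{A}(v)$ and $\mathcal{B}(\xi)$, whose $(i,j)$ pattern is nonzero only for $i<j$: they provide the $(n-1)$-dimensional direct summand needed to carry the upper-triangular part.

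Having expressed every scalar block of $K^{GEN}\chi_{\mathcal{I}}$ as a composition of operators on $L^2$ over the auxiliary contours, I would assemble everything into block-operator form on the Hilbert space $H = L^2(\gamma, \mathbb{C}^n) \oplus L^2\bigl(\bigcup_k \gamma_{-k}, \mathbb{C}^n\bigr)$ and apply the standard trick used at the end of the proof of Theorem \ref{identityFD}: left-multiply by an upper triangular block matrix of unit determinant to turn the $2\times 2$ block operator into the form $\operatorname{Id} - \mathbb{K}^{GEN}$. A direct matching of rows and columns against the stated $\mathbf{f}$ and $\mathbf{g}$, together with the verification that $\mathbf{f}^T(v)\cdot \mathbf{g}(v) = 0$ (guaranteed by the disjoint supports $\chi_\gamma$ vs $\chi_{\gamma_{-k}}$), finishes the identification of $\mathbb{K}^{GEN}$ as an IIKS integrable operator.

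\textbf{Main obstacle.} The theoretical content of each step is essentially that of the single-time proof, so the actual difficulty is combinatorial bookkeeping: organizing the many contour pairs $(\gamma, \gamma_{-k})$, keeping track of signs coming from contour orientation and from the alternating sum over intervals, and aligning the rectangular $\mathcal{A}(v), \mathcal{B}(\xi)$ blocks with the strictly upper-triangular index pattern $i<j$ of the $P_{\Delta_{ij}}$ contribution so that the collapse of the block determinant produces exactly the $(3n-1)\times 2n$ matrices $\mathbf{f}, \mathbf{g}$ stated in the theorem.
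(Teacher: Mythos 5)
Your plan follows essentially the same route as the paper: Fourier conjugation, the change of variables $s=1/(v+4\tau_j)$ landing on the shifted contours $\gamma_{-k}=\tfrac{1}{\gamma}-4\tau_k$, factorization of the $H$-part through $L^2(\gamma,\mathbb{C}^n)$, the $P_{\Delta}$-part living in the extra block (which is indeed where the $n-1$ surplus dimensions in $N=3n-1$ come from), and a final block-operator assembly. One remark on a detail: in the paper $\mathcal{P}$ is \emph{not} factorized through an intermediate contour; it is kept as a single Hilbert--Schmidt operator sitting in the lower-right block of $\mathbb{K}^{GEN}$, and the rectangular blocks $\mathcal{A}(v)$, $\mathcal{B}(\xi)$ arise only when that block's kernel is rewritten in the IIKS form $\mathbf{f}^T\mathbf{g}/(v-\xi)$. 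That is a presentational difference, not an error.

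The genuine gap is your treatment of the determinant identities. You invoke ``the standard trick used at the end of the proof of Theorem \ref{identityFD}: left-multiply by an upper triangular block matrix of unit determinant,'' but that trick relies on multiplicativity of the Fredholm determinant, which holds only for trace-class perturbations. Here the full operator is \emph{not} trace class: only $H\chi_{\mathcal{I}}$ is trace class, while $P_{\Delta}\chi_{\mathcal{I}}$ (and hence $\mathcal{P}$, $\mathcal{N}$, $\mathcal{M}$ separately) is merely Hilbert--Schmidt, so ``$\det$'' must be understood through the Fredholm series expansion, i.e.\ as $e^{\operatorname{Tr} H}\det_2$ with $\det_2$ the Carleman regularized determinant. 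The paper's chain of equalities therefore replaces your unit-determinant multiplier by the auxiliary Hilbert--Schmidt operator $\mathbb{K}^{GEN,2}$ with $\det_2(\operatorname{Id}-\mathbb{K}^{GEN,2})\equiv 1$ and uses the identity
\begin{equation*}
\left.\det\right._2(\operatorname{Id}-A)\,\left.\det\right._2(\operatorname{Id}-B)=\left.\det\right._2(\operatorname{Id}-A-B+AB)\,e^{\operatorname{Tr}(AB)},
\end{equation*}
together with the observation that the relevant kernels are diagonal-free so that the regularized and series-defined determinants agree and the trace corrections cancel. Without this step your argument does not literally go through: the determinants you are multiplying are not defined in the ordinary trace-class sense, and the correction factor $e^{\operatorname{Tr}(AB)}$ must be shown to be trivial rather than assumed away.
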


\begin{oss}
By Fredholm determinant $``\det"$ we denote the determinant defined through the usual series expansion
\begin{equation}
\det(\operatorname{Id}-K) := 1+ \sum_{k=1}^\infty \frac{1}{k!}\int_{X^k} \det [K(x_i,x_j)]_{i,j=1}^k d\mu(x_1)\ldots d\mu(x_k) \label{Freddet}
\end{equation} 
with $K$ an integral operator acting on the Hilbert space $L^2(X, d\mu(x))$ and kernel $K(x,y)$.

In the case at hand, we will see that the operator $K^{GEN}\chi_{\mathcal{I}}= (H + P_{\Delta})\chi_{\mathcal{I}}$ is the sum of a trace-class operator ($H \chi_{\mathcal{I}}$) plus a Hilbert-Schmidt operator ($P_\Delta \chi_{\mathcal{I}}$) with diagonal-free kernel.

Therefore the naming of Fredholm determinant refers to the following expression:
\begin{gather}
``\det" (\operatorname{Id} - K^{GEN}\chi_{\mathcal{I}}) 
= e^{\operatorname{Tr} H} \left.\det\right._2(\operatorname{Id} - K^{GEN}\chi_{\mathcal{I}})
\end{gather}
where $\left. \det \right._2$ denotes the regularized Carleman determinant (see \cite{traceideals}).
\end{oss}

\begin{proof}
Thanks to the invariance of the Fredholm determinant under kernel conjugation, we can discard the term $\left( \frac{y}{x} \right)^{\nu}$ in our further calculations.

We will work on the entry $(i,j)$ of the kernel. We can notice that for $x<0$ or $y<0$ the kernel is identically zero, $K^{GEN}(x,y) \equiv 0$. Then, applying Cauchy's theorem, we have
\begin{gather}
\chi_{[0,a^{(j)}]}(y) H_{ij}(x,y)  \nonumber \\
= 4  \int_{i\mathbb{R} + \epsilon} \frac{d\xi}{2\pi i} \frac{e^{\xi(a^{(j)}-y)}}{\xi-s}  \iint_{\hat \gamma \times \gamma} \frac{ds \, dt}{(2\pi i)^2}  \frac{e^{  - a^{(j)}s + xt + \frac{1}{2}\left( \tau-\frac{1}{s} +4\Delta_{ji}  \right)^2 - \frac{1}{2} \left( \tau-\frac{1}{t} \right)^2 }}{ \left(\frac{1}{s} - \frac{1}{t} - 4\Delta_{ji}  \right)} \left( \frac{s}{t} \right)^\nu \frac{1}{st} \nonumber \\
=  -4  \int_{i\mathbb{R} + \epsilon} \frac{d\xi}{2\pi i} e^{-y \xi}  \int_{i\mathbb{R} + \epsilon} \frac{ dt}{2\pi i}  e^{xt} \int_\gamma \frac{dv}{2\pi i} \frac{e^{ a^{(j)}\xi - \frac{1}{2} \left( \tau-\frac{1}{t} \right)^2 - \frac{a^{(j)}}{v+4\tau_j} + \frac{1}{2}\left( \tau - 4\tau_{i} -v  \right)^2 }}{ \left(  \frac{1}{\xi} -4\tau_j-v \right)\left( \frac{1}{t} - 4\tau_{i}-v  \right)} \left( \frac{1}{(v+4\tau_j)t} \right)^{\nu} \frac{1}{\xi t}
\end{gather}
where we deformed $\gamma$ into a translated imaginary axis $i\mathbb{R}+\epsilon$ ($\epsilon>0$) in order to make Fourier operator defined below more explicit; the last equality follows from the change of variable on $s = 1/(v+4\tau_j)$, thus the contour $\hat \gamma$ becomes similar to $\gamma$ and can be continuously deformed into it. 

On the other hand,
\begin{gather}
 \chi_{[0, a^{(j)}]}(y)  P_{\Delta_{ji}}(x,y) \nonumber \\
= \frac{-1}{\Delta_{ji}}  \int_{i\mathbb{R}+\epsilon} \frac{d\xi}{2\pi i} e^{-\xi y}  \int_{ \gamma} \frac{e^{\xi a^{(j)} + \frac{x}{4\Delta_{ji}}(t-1) - \frac{a^{(j)}}{4\Delta_{ji} }\left(1- \frac{1}{t} \right)}}{\xi - \frac{1}{4\Delta_{ji}}\left(1-\frac{1}{t}\right)} t^{-\nu -1} \frac{dt}{2\pi i}  \nonumber \\
=  -4 \int_{i\mathbb{R}+\epsilon} \frac{d\xi}{2\pi i} e^{-\xi y}   \int_{i\mathbb{R}+ \epsilon} \frac{dt}{2\pi i}   e^{xt}  \frac{e^{ a^{(j)} \left(\xi - \frac{t}{4\Delta_{ji}t+1 }\right)}}{t \xi \left( 4\Delta_{ji} + \frac{1}{t} -\frac{1}{\xi} \right)} (4\Delta_{ji}t+1)^{-\nu} 
\end{gather}

It is easily recognizable the conjugation with a Fourier-like operator as in (\ref{Fourier}), so that
\begin{equation}
\left(K^{GEN}\chi_{\mathcal{I}}\right)_{ij} = \mathcal{F}^{-1} \circ \left( \mathcal{H}_{ij} + \chi_{i<j}\mathcal{P}_{ij} \right) \circ \mathcal{F}
\end{equation}
with
\begin{gather}
\mathcal{H}_{ij}(\xi,t):=  -4\int_\gamma \frac{dv}{2\pi i} \frac{e^{ a^{(j)}\xi - \frac{1}{2} \left( \tau-\frac{1}{t} \right)^2 - \frac{a^{(j)}}{v+4\tau_j} + \frac{1}{2}\left( \tau - 4\tau_{i} -v  \right)^2 }}{ \left(  \frac{1}{\xi} -4\tau_j-v \right)\left( \frac{1}{t} - 4\tau_{i}-v  \right)} \left( \frac{1}{(v+4\tau_j)t} \right)^{\nu} \frac{1}{\xi t} \\
\mathcal{P}_{ij}(\xi, t):= -4  \frac{e^{ a^{(j)} \left(\xi - \frac{t}{4\Delta_{ji}t+1 }\right)}}{  4\tau_{j} - 4\tau_i + \frac{1}{t} -\frac{1}{\xi}} (4\Delta_{ji}t+1)^{-\nu} \frac{1}{\xi t}
\end{gather}

Now we can perform the following change of variables on the Fourier-transformed kernel
\begin{gather}
\xi_j:= \frac{1}{\xi} - 4\tau_j, \ \ \  \eta_i:= \frac{1}{t} -4 \tau_i \label{changevar}
\end{gather}
so that the kernel will have the final expression
\begin{gather}
 \mathcal{K}^{GEN}_{ij}(\xi,\eta) =  \mathcal{H}_{ij}+ \chi_{\tau_i<\tau_j}\mathcal{P}_{ij} = \nonumber \\
-4\int_{\gamma} \frac{dv}{2\pi i} \frac{e^{ \frac{a^{(j)}}{\xi+4\tau_j} - \frac{1}{2} \left( \tau -4\tau_i -\eta \right)^2 - \frac{a^{(j)}}{v+4\tau_j} + \frac{1}{2}\left( \tau - 4\tau_{i} -v  \right)^2 }}{ \left(  \xi-v \right)\left( \eta-v  \right)} \left( \frac{\eta+4\tau_i}{v+4\tau_j} \right)^{\nu}  \nonumber \\
+ 4\chi_{\tau_i<\tau_j}   \frac{e^{ \frac{a^{(j)}}{\xi + 4\tau_j} - \frac{a^{(j)}}{\eta + 4\tau_j }}}{\xi-\eta} \left(\frac{4\Delta_{ji}}{\eta + 4\tau_i}+1\right)^{-\nu} 
\end{gather}
with $\xi \in \frac{1}{\gamma} - 4\tau_j =: \gamma_{-j} $ and $\eta \in \frac{1}{\gamma} - 4\tau_i=: \gamma_{-i}$.
The obtained (Fourier-transformed) Generalized Bessel operator is an operator acting on $L^2\left(\bigcup_{k=1}^n \gamma_{-k}, \mathbb{C}^n \right) \sim \bigoplus_{k=1}^n L^2\left( \gamma_{-k}, \mathbb{C}^n \right)$.

\begin{lemma}
The following decomposition holds $\mathcal{K}^{GEN} = \mathcal{M}\circ\mathcal{N} + \mathcal{P}$, with $\mathcal{M}$, $\mathcal{N}$, $\mathcal{P}$ Hilbert-Schmidt operators
\begin{gather}
\mathcal{M} : L^2\left(\gamma, \mathbb{C}^n \right) \rightarrow L^2\left(\bigcup_{k=1}^n \gamma_{-k}, \mathbb{C}^n  \right) \\
\mathcal{N} : L^2\left(\bigcup_{k=1}^n \gamma_{-k}, \mathbb{C}^n  \right) \rightarrow L^2\left(\gamma, \mathbb{C}^n \right) \\
\mathcal{P} : L^2\left(\bigcup_{k=1}^n \gamma_{-k}, \mathbb{C}^n  \right) \rightarrow L^2\left(\bigcup_{k=1}^n \gamma_{-k}, \mathbb{C}^n  \right)
\end{gather}
with kernel entries
\begin{gather}
\mathcal{M}_{ij}(v, \eta) = \frac{e^{ - \frac{1}{2} \left( \tau -4\tau_i -\eta \right)^2 + \frac{1}{2}\left( \tau - 4\tau_{i} -v  \right)^2 }}{\left( \eta-v  \right)} \left( \frac{\eta+4\tau_i}{v+4\tau_j} \right)^{\nu}    \chi_{\gamma}(v) \chi_{\gamma_{-i}}(\eta)\\
\mathcal{N}_{ij}(\xi,v; a^{(j)}) = 4\delta_{ij} \frac{ e^{  a^{(j)}\left(  \frac{1}{\xi+4\tau_j} - \frac{1}{v+4\tau_j}\right) }}{ \xi - v } \chi_{\gamma_{-j}}(\xi) \chi_{\gamma}(v) \\
\mathcal{P}_{ij} (\xi,\eta; a^{(j)})= 4\chi_{\tau_i<\tau_j}   \frac{e^{ \frac{a^{(j)}}{\xi + 4\tau_j} - \frac{a^{(j)}}{\eta + 4\tau_j }}}{\xi-\eta} \left(\frac{\eta + 4\tau_j}{\eta + 4\tau_i}\right)^{-\nu} \chi_{\gamma_{-i}}(\eta) \chi_{\gamma_{-j}}(\xi)
\end{gather}
\end{lemma}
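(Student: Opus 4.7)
The plan is to check two distinct claims packed into the lemma: the algebraic identity $\mathcal{K}^{GEN} = \mathcal{M}\circ\mathcal{N} + \mathcal{P}$, and the Hilbert--Schmidt property of each of the three factors. Since $\mathcal{N}_{kj}$ is proportional to $\delta_{kj}$ in the matrix indices, I would first observe that the matrix composition collapses to a single term,
\[
(\mathcal{M}\circ\mathcal{N})_{ij}(\eta,\xi) = \int_\gamma \mathcal{M}_{ij}(v,\eta)\,\mathcal{N}_{jj}(\xi,v)\,\frac{dv}{2\pi i}.
\]
Substituting the two explicit kernels, the Gaussian phases combine additively into $\tfrac12(\tau-4\tau_i-v)^2 - \tfrac12(\tau-4\tau_i-\eta)^2$, the linear exponentials assemble into $\exp\bigl(a^{(j)}/(\xi+4\tau_j)-a^{(j)}/(v+4\tau_j)\bigr)$, the branch factors collect to $\bigl((\eta+4\tau_i)/(v+4\tau_j)\bigr)^\nu$, and the two Cauchy denominators multiply to $(\xi-v)(\eta-v)$, with the overall $-4$ carried by $\mathcal{N}$. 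This reproduces $\mathcal{H}_{ij}$ exactly. The operator $\mathcal{P}$ coincides by inspection with the $\chi_{\tau_i<\tau_j}$ Cauchy-type summand of $\mathcal{K}^{GEN}_{ij}$ once the branch factor is rewritten via $\tau_j-\tau_i=\Delta_{ji}$ as $(4\Delta_{ji}/(\eta+4\tau_i)+1)^{-\nu} = \bigl((\eta+4\tau_j)/(\eta+4\tau_i)\bigr)^{-\nu}$.

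For the Hilbert--Schmidt bound I would reuse exactly the strategy employed in Lemma~\ref{lemma5}: whenever a kernel $K(z,w)$ is supported on a product $C_1\times C_2$ of two disjoint contours, it suffices to verify $\iint_{C_1\times C_2}|K(z,w)|^2\,|dz|\,|dw|<\infty$. For each of $\mathcal{M}$, $\mathcal{N}$, $\mathcal{P}$ the relevant pair of contours is disjoint (either $\gamma$ against $\gamma_{-k}$, or $\gamma_{-i}$ against $\gamma_{-j}$ with $i\neq j$), so the Cauchy factor is uniformly bounded. Decay at the noncompact ends is supplied by the Gaussian $e^{-\frac12(\tau-4\tau_i-\eta)^2}$ in $\mathcal{M}$, by the exponential $e^{-a^{(j)}/(v+4\tau_j)}$ in $\mathcal{N}$, and by the essentially compact arrangement of the loops $\gamma_{-i}, \gamma_{-j}$ in $\mathcal{P}$.

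The main technical obstacle is the apparently growing factor $e^{\frac12(\tau-4\tau_i-v)^2}$ in $\mathcal{M}$ along the unbounded portion of $\gamma$. The resolution is to deform $\gamma$ so that away from the origin it runs close to the imaginary axis, exactly as is done implicitly when passing through $i\mathbb{R}+\epsilon$ in the Fourier step: on such a contour $\operatorname{Re}\bigl((\tau-4\tau_i-v)^2\bigr)\to-\infty$, so the quadratic produces Gaussian decay rather than growth. With this contour choice all remaining $L^2$-estimates become routine and entirely parallel to those carried out for $\mathcal{A}_{j,1}$ and $\mathcal{B}_1$ in Lemma~\ref{lemma5}.
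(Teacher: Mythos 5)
Your proposal is correct and follows the same route as the paper, whose proof consists solely of invoking the Hilbert--Schmidt criterion $\iint_{C_1\times C_2}|K(z,w)|^2\,|dz|\,|dw|<\infty$ for kernels supported on disjoint contours, exactly as in Lemma \ref{lemma5}. You in fact supply more detail than the paper does: the explicit verification that the composition $\mathcal{M}\circ\mathcal{N}$ reassembles $\mathcal{H}_{ij}$ (which the paper leaves as "a matter of computation"), and the contour deformation of $\gamma$ toward the imaginary axis that turns the a priori growing factor $e^{\frac12(\tau-4\tau_i-v)^2}$ into a decaying one — a point the paper passes over silently but which is needed for the $L^2$ estimate to close.
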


\begin{proof} As in Lemma \ref{lemma5}, all the kernels involved are of the form $K(z,w)$ with $z$ and $w$ on two disjoint curves, say $C_1$ and $C_2$. The Hilbert-Schmidt property it thus ensured by simply checking that $\iint_{C_1 \times C_2} |K(z,w)|^2 |dz| |dw| < \infty$.
\end{proof}

We define the Hilbert space 
\begin{equation}
H: =L^2\left(\gamma \cup \bigcup_{k=1}^n \frac{1}{\gamma}-4\tau_k ,\mathbb{C}^n\right) \sim  L^2\left(\bigcup_{k=1}^n \frac{1}{\gamma}-4\tau_k, \mathbb{C}^n\right) \otimes L^2(\gamma, \mathbb{C}^n),
\end{equation}
and the matrix operator $\mathbb{K}^{GEN}: H \rightarrow H$ 
\begin{equation}
\mathbb{K}^{GEN} = \left[ \begin{array}{c|c} 
0 & \mathcal{N} \\ \hline
\mathcal{M} & \mathcal{P}
\end{array} \right]
\end{equation}

For now, we denote by $``\det"$ the determinant defined by the Fredholm expansion (\ref{Freddet}); then, $``\det"(\text{Id} - \mathbb{K}_B) = \left. \det\right._2(\text{Id} - \mathbb{K}_B)$, since its kernel is diagonal-free. We also introduce another Hilbert-Schmidt operator
\begin{equation*}
\mathbb{K}^{GEN,2} = \left[ \begin{array}{c|c} 
0 & -\mathcal{N} \\ \hline
0&0
\end{array} \right]
\end{equation*}
whose Carleman determinant ($ \text{det}_2$) is still well defined and $\left. \det \right._2(I - \mathbb{K}^{GEN,2})$ is identically $1$.

We finally perform the following chain of equalities
\begin{gather}
``\det" (\text{Id}_{L^2(\mathbb{R}_+)} - K^{GEN} \chi_{\mathcal{I}}) 
= \left. \det\right._2 \left( \text{Id} - K^{GEN} \chi_{\mathcal{I}}\right)e^{-\operatorname{Tr}(H)}  \nonumber \\
= \left. \det\right._2 \left( \text{Id}_{L^2\left(\bigcup_{k=1}^n \gamma_{-k}\right)} - \mathcal{K}^{GEN} \right)e^{- \operatorname{Tr} (\mathcal{H})}  \nonumber \\
= \left. \det \right._2 (\text{Id}_H - \mathbb{K}^{GEN}) \left. \det \right._2(\text{Id}_H - \mathbb{K}^{GEN,2})
= \left. \det \right._2(\text{Id}_H - \mathbb{K}^{GEN}) \nonumber \\
= `` \det" (\text{Id}_H - \mathbb{K}^{GEN}) 
\end{gather}
The first equality follows from the fact that $K^{GEN} - H$ is diagonal-free; the second equality follows from invariance of the determinant under Fourier transform; the third identity is an application of the following result: given $A$, $B$ Hilbert-Schmidt operators, then
\begin{equation*}
 \left. \det \right._2 (\text{Id} - A) \left. \det \right._2(\text{Id} - B)  = \left. \det \right._2 (\text{Id} - A - B + AB) e^{\operatorname{Tr}(AB)} .
 \end{equation*}
 
It is finally just a matter of computation to show that $\mathbb{K}_B$ is an integrable operator of the form (\ref{IIKSGENBesselmulti1})-(\ref{IIKSGENBesselmulti2}).
\end{proof}

\paragraph{$2\times 2$ case.} As an explanatory example, let's consider a Generalized Bessel process with two times $\tau_1<\tau_2$ and two intervals $I_1:= [0,a]$ and $I_2:=[0,b]$.
\begin{gather}
K^{GEN}(x,y) \cdot \text{diag} \left[ \chi_{[0,a]}(y), \chi_{[0,b]}(y) \right] =  \left(\frac{y}{x}\right)^{\nu} \times \nonumber \\
\left\{  \left[ \begin{array}{cc}
-4\int_{\Sigma} \frac{dt \, ds}{(2\pi i)^2} \frac{e^{ - \frac{1}{2}\left(\tau - \frac{1}{t} \right)^2 + xt + \frac{1}{2}\left(\tau - \frac{1}{s}\right)^2 - ys}}{\left(s-t\right) t^\nu s^{-\nu}} 
&0  \\
-4 \int_{\Sigma} \frac{dt \, ds}{(2\pi i)^2} \frac{e^{ - \frac{1}{2}\left(\tau - \frac{1}{t} \right)^2 + xt + \frac{1}{2}\left(\tau - \frac{1}{s}+ \Delta_{12}\right)^2 - ys}}{\left(s-t+ \Delta_{12} ts\right)t^\nu s^{-\nu}} 
&0
\end{array} \right] \chi_{[0,a]}(y)  \right. \nonumber \\
+ \left.
 \left[ \begin{array}{cc}
0 & -4 \int_{\Sigma} \frac{dt \, ds}{(2\pi i)^2} \frac{e^{ - \frac{1}{2}\left(\tau - \frac{1}{t} \right)^2 + xt + \frac{1}{2}\left(\tau - \frac{1}{s}+ \Delta_{21}\right)^2 - ys}}{\left(s-t+ \Delta_{21} ts\right)t^\nu s^{-\nu}}  \\
0 &-4  \int_{\Sigma} \frac{dt \, ds}{(2\pi i)^2} \frac{e^{ - \frac{1}{2}\left(\tau - \frac{1}{t} \right)^2 + xt + \frac{1}{2}\left(\tau - \frac{1}{s}\right)^2 - ys}}{\left(s-t\right)t^\nu s^{-\nu}}
\end{array} \right] \chi_{[0,b]}(y)
 \right. \nonumber \\
\left. + \left[\begin{array}{cc} 0 & - \frac{1}{\Delta_{21}}  \int_\gamma e^{\frac{x}{4\Delta_{21}}(t-1) + \frac{y}{4\Delta_{21} }\left(  \frac{1}{t}-1  \right)} t^{-\nu -1} \frac{dt}{2\pi i} \\ 0 &0 \end{array} \right]\chi_{[0,b]}(y) \right\}
\end{gather}
with $\Sigma := \gamma \times \hat \gamma$.

Then, the integral operator $\mathbb{K}^{GEN}:H\rightarrow H$ on the space $H:= L^2\left(\gamma \cup \gamma_{-1} \cup \gamma_{-2}, \mathbb{C}^2\right)$  has the following expression
\begin{gather}
\mathbb{K}^{GEN} = \left[ \begin{array}{c|c} 
0 & \mathcal{N} \\ \hline
\mathcal{M} & \mathcal{P}
\end{array} \right] 
\end{gather}
\begin{gather}
\mathcal{N} = \frac{1}{\xi-v} \left[ \begin{array}{cc}
 -4  e^{  \frac{a}{\xi_1}- \frac{a}{v_1} }  \chi_{\gamma_{-1}}(\xi) \chi_{\gamma}(v) & 0 \\
 0 & -4  e^{  \frac{b}{\xi_2} - \frac{b}{v_2} }  \chi_{\gamma_{-2}}(\xi) \chi_{\gamma}(v) \\
\end{array} \right] \\
\mathcal{M} =  \left[ \begin{array}{cc}
 \frac{e^{  \frac{(\xi_{1,\tau})^2}{2}  - \frac{(v_{1,\tau})^2}{2}}}{\xi-v} \frac{ v_1^\nu}{\xi_1^\nu} \chi_{\gamma}(\xi)  \chi_{\gamma_{-1}}(v) 
 &   \frac{e^{  \frac{(\xi_{1,\tau})^2}{2} - \frac{(v_{1,\tau})^2}{2} }}{\xi -v} \frac{v_1^\nu}{\xi_2^\nu} \chi_{\gamma}(\xi)   \chi_{\gamma_{-1}}(v)  \\
\frac{ e^{  \frac{(\xi_{2,\tau})^2}{2} -\frac{(v_{2,\tau})^2}{2} }}{\xi-v}  \frac{v_2^\nu }{\xi_1^\nu} \chi_{\gamma}(\xi)  \chi_{\gamma_{-2}}(v)
 & \frac{e^{  \frac{(\xi_{2,\tau})^2}{2} - \frac{(v_{2,\tau})^2}{2} }}{\xi-v} \frac{v_2^\nu}{\xi_2^\nu} \chi_{\gamma}(\xi)   \chi_{\gamma_{-2}}(v) 
\end{array}\right] \\
\mathcal{P} = \frac{1}{\xi-v} \left[ \begin{array}{cc}
 0 & -4  e^{ \frac{b}{\xi_2}- \frac{b}{v_2 }}  \frac{v_1 ^\nu}{v_2^\nu} \chi_{\gamma_{-2}}(\xi)  \chi_{\gamma_{-1}}(v)  \\
0 & 0 
\end{array}
\right]
\end{gather}
and the equality between Fredholm determinants holds
\begin{equation}
\det\left(\text{Id}_{L^2(\mathbb{R}_+, \mathbb{C}^2)} - K^{GEN} \text{diag}(\chi_{I_1}, \chi_{I_2})\right) = \det \left( \text{Id}_H - \mathbb{K}^{GEN} \right).
\end{equation}

\subsection{Rieman-Hilbert problem and $\tau$-function}
We can now relate the Fredholm determinant of the multi-time Generalized Bessel operator to the isomonodromy  theory by defining a suitable Riemann-Hilbert problem.

\begin{prop} 
The Riemann-Hilbert problem associated to the integrable kernel (\ref{IIKSGENBesselmulti1})-(\ref{IIKSGENBesselmulti2}) is the following:
\begin{align}
& \Gamma_+(\lambda)  = \Gamma_-(\lambda)M(\lambda) \ \ \  \lambda \in \Sigma:= \gamma \cup \left( \bigcup_{j=1}^n \gamma_{-j}\right) \\
& \Gamma(\lambda) = I + \mathcal{O}\left( \frac{1}{\lambda} \right) \ \ \ \lambda \rightarrow \infty \\
& M(\lambda) := I - 2\pi i J^{GEN}(\lambda)
\end{align}
with $\Gamma$ a $(3n-1)\times(3n-1)$ matrix which is analytic on $\mathbb{C} \backslash \Sigma$ and along the collection of curves $\Sigma$ satisfies the above jump condition with
\begin{gather}
J^{GEN} (\lambda) = \textbf{f}(\lambda) \textbf{g}(\lambda)^T =\nonumber \\ 
\left[ \begin{array}{c|c|c}
0 & \operatorname{diag} \mathcal{N}_f(\lambda) \mathcal{M}_g(\lambda) ^T& 0 \\ \hline
\operatorname{diag} \mathcal{M}_f(\lambda) \operatorname{diag} \mathcal{N}_g(\lambda) & 0 & \operatorname{diag}\mathcal{M}_f(\lambda) \mathcal{B}(\lambda)^T \\ \hline
\mathcal{A}(\lambda)^T \operatorname{diag}\mathcal{N}_g(\lambda) & 0 & \mathcal{A}(\lambda)^T \mathcal{B}(\lambda)^T
\end{array} \right]
\end{gather}
\begin{gather}
 \operatorname{diag} \mathcal{N}_f \cdot \mathcal{M}_g^T  = \left[ \begin{array}{ccc}
 -4 e^{-\theta_{1}+\theta_{1,\tau}}\chi_\gamma  & \ldots &-4 e^{-\theta_{1}+\theta_{n,\tau}}  \chi_\gamma \\
 \vdots & & \vdots \\
 -4 e^{ -\theta_{n}+\theta_{1,\tau}} \chi_\gamma & \ldots & -4 e^{-\theta_{n} +\theta_{n,\tau}} \chi_\gamma
 \end{array} \right]  \in \operatorname{Mat}_{n\times n}(\mathbb{C})
 \end{gather}
 \begin{gather}
 \operatorname{diag} \mathcal{M}_f \cdot \operatorname{diag} \mathcal{N}_g =   \operatorname{diag} \left[ e^{\theta_{1}-\theta_{1,\tau}} \chi_{\gamma_{-1}}, \ldots, e^{ \theta_{n}-\theta_{n,\tau}} \chi_{\gamma_{-n}}\right]  \in \operatorname{Mat}_{n\times n}(\mathbb{C})
\end{gather}
\begin{gather}
\operatorname{diag} \mathcal{M}_f \cdot \mathcal{B}^T   \in \operatorname{Mat}_{n\times (n-1)}(\mathbb{C}) \nonumber \\
=  \left[ \begin{array}{cccc}
 0 & && \\
e^{\theta_{2}-\theta_{2,\tau}}\chi_{\gamma_{-2}} & 0  && \\
& e^{\theta_{3}-\theta_{3,\tau}}\chi_{\gamma_{-3}} & \ddots & \\
 &&& e^{\theta_{n}-\theta_{n,\tau}}\chi_{\gamma_{-n}} 
\end{array} \right] 
\end{gather}
\begin{gather}
\mathcal{A}^T \cdot \operatorname{diag}\mathcal{N}_g   \in \operatorname{Mat}_{(n-1)\times n}(\mathbb{C}) \nonumber \\
= \left[ \begin{array}{ccccc}
-4e^{ \theta_{1}-\theta_{2}} \chi_{\gamma_{-1}}& 0 &&& \\
-4e^{\theta_{1} - \theta_{3}} \chi_{\gamma_{-1}}& -4e^{\theta_{2} - \theta_{3}} \chi_{\gamma_{-2}} & 0 && \\
-4e^{ \theta_{1} - \theta_{4}} \chi_{\gamma_{-1}}& -4e^{ \theta_{2}- \theta_{4}} \chi_{\gamma_{-2}} & -4e^{\theta_{3}- \theta_{4} } \chi_{\gamma_{-3}} & 0 &\\
\vdots&&& \ddots& \\
&&&& \\
-4e^{\theta_{1}- \theta_{n}} \chi_{\gamma_{-1}}& \ldots && -4e^{ \theta_{n-1}- \theta_{n}} \chi_{\gamma_{-(n-1)}} & 0 
\end{array} \right] 
\end{gather}
\begin{gather}
\mathcal{A}^T \cdot \mathcal{B}^T   \in \operatorname{Mat}_{(n-1)\times (n-1)}(\mathbb{C}) \nonumber \\
= \left[ \begin{array}{ccccc}
0&&&& \\
-4e^{\theta_{2}-\theta_{3}}\chi_{\gamma_{-2}} &&& & \\
  -4 e^{\theta_{2} -  \theta_{4}} \chi_{\gamma_{-2}}& -4 e^{\theta_{3}- \theta_{4}} \chi_{\gamma_{-3}} && &\\
 \vdots &\vdots &&& \\
&&&& \\
 -4e^{\theta_{2} - \theta_{n}}\chi_{\gamma_{-2}} && -4e^{\theta_{n-1}- \theta_{n}}\chi_{\gamma_{n-1}} & &0
\end{array} \right]
\end{gather} 
with $ \theta_{k}= \frac{a^{k}}{\lambda_k} +\nu \ln \lambda_k$ and $\theta_{h,\tau} =  \frac{(\lambda_{h,\tau})^2}{2}$, $k,h = 1,\ldots, n$.
\end{prop}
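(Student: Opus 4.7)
The proof is a direct application of the Its-Izergin-Korepin-Slavnov dictionary, strictly parallel to the verification carried out in the single-time proposition. Recall that for any integrable kernel of the form $K(v,\xi) = \mathbf{f}(v)^T\mathbf{g}(\xi)/(v-\xi)$ on a contour $\Sigma$ satisfying $\mathbf{f}(\lambda)^T\mathbf{g}(\lambda) \equiv 0$ on $\Sigma$, there is a canonically associated Riemann-Hilbert problem with normalization $\Gamma(\lambda) \to I$ at infinity and jump matrix of the form $M(\lambda) = I - 2\pi i\,\mathbf{f}(\lambda)\mathbf{g}(\lambda)^T$. The statement therefore reduces to two verifications: first, that $\mathbf{f}^T\mathbf{g} \equiv 0$ on $\Sigma$, ensuring that the IIKS recipe applies; second, that the outer product $\mathbf{f}(\lambda)\mathbf{g}(\lambda)^T$ has the displayed $(3n-1)\times(3n-1)$ block structure.

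For the first point, a direct inspection of the $2n \times 2n$ matrix $\mathbf{f}^T\mathbf{g}$ shows that every potentially non-zero entry pairs a factor supported on $\gamma$ with one supported on some $\gamma_{-k}$, or two factors supported on distinct $\gamma_{-k}$; since the contours $\gamma, \gamma_{-1}, \ldots, \gamma_{-n}$ are pairwise disjoint, all such products vanish pointwise on $\Sigma$. For the second, view $\mathbf{f}$ as an $N \times 2n$ matrix ($N = 3n-1$) with three stacked row-blocks of sizes $n, n, n-1$ housing $\operatorname{diag}\mathcal{N}$, $\operatorname{diag}\mathcal{M}$, and $\mathcal{A}^T$ respectively, and $\mathbf{g}^T$ as a $2n \times N$ matrix with the analogous three column-blocks. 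Then $\mathbf{f}\mathbf{g}^T$ naturally organizes as a $3\times 3$ block array. The four entries in positions $(1,1)$, $(1,3)$, $(2,2)$, $(3,2)$ vanish purely algebraically, because each pairs a zero block of $\mathbf{f}$ with a non-zero block of $\mathbf{g}^T$; the remaining five entries collapse exactly to $\operatorname{diag}\mathcal{N}_f\,\mathcal{M}_g^T$, $\operatorname{diag}\mathcal{M}_f\,\operatorname{diag}\mathcal{N}_g$, $\operatorname{diag}\mathcal{M}_f\,\mathcal{B}^T$, $\mathcal{A}^T\operatorname{diag}\mathcal{N}_g$, and $\mathcal{A}^T\mathcal{B}^T$, as displayed.

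It remains to reduce the explicit products of exponentials to the compact phases $\theta_k = a^{(k)}/\lambda_k + \nu\ln\lambda_k$ and $\theta_{h,\tau} = (\lambda_{h,\tau})^2/2$ appearing at the end of the statement. For instance, the $(i,h)$ entry of $\operatorname{diag}\mathcal{N}_f\,\mathcal{M}_g^T$ is the product of $-4\,e^{-a^{(i)}/\lambda_i}\chi_\gamma$ with $e^{(\lambda_{h,\tau})^2/2}\lambda_i^{-\nu}\chi_\gamma$, which equals $-4\,e^{-\theta_i + \theta_{h,\tau}}\chi_\gamma$ after absorbing the $\nu\ln\lambda_i$ factor into $-\theta_i$; the other four non-zero blocks are handled the same way, and the strict triangularity of $\mathcal{A}$ and $\mathcal{B}$ propagates directly to the triangular shape in the last two displays of the statement. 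The only genuine obstacle is the bookkeeping of multi-indices and of the contour labels $\gamma$, $\gamma_{-k}$ across the $(3n-1)\times(3n-1)$ block structure; no new analytic input is required beyond the IIKS framework already invoked in the single-time case.
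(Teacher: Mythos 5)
Your proposal is correct and follows the same route as the paper, which for this proposition (as for its single-time analogue) reduces the claim to the standard IIKS correspondence plus the verification that $\textbf{f}^T\textbf{g}\equiv 0$ and that $\textbf{f}\,\textbf{g}^T$ has the stated block form; your block-by-block computation of the outer product and the support/triangularity argument for the vanishing of $\textbf{f}^T\textbf{g}$ are exactly the omitted details. No discrepancy to report.
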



\paragraph{$2\times 2$ case.} In the simple $2$-times case, the jump matrix reads
\begin{gather}
J^{GEN}(\lambda) = \nonumber \\
\left[ \begin{array}{ccccc}
0&0& -4e^{-\theta_{1}+\theta_{1,\tau}} \chi_{\gamma}& -4e^{-\theta_{1}+\theta_{2,\tau}} \chi_{\gamma}&0 \\
0&0& -4e^{-\theta_{2}+\theta_{1,\tau}}\chi_{\gamma}& -4e^{-\theta_{2}+\theta_{2,\tau}}\chi_{\gamma}&0 \\
e^{\theta_{1}-\theta_{1,\tau}} \chi_{\gamma_{-1}}&0&0&0&0 \\
0& e^{\theta_{2}-\theta_{2,\tau}}\chi_{\gamma_{-2}}&0&0&  e^{\theta_{2}-\theta_{2,\tau}}\chi_{\gamma_{-2}} \\
\scriptstyle -4e^{\theta_{1} - \theta_{2}} \chi_{\gamma_{-1}}&0&0&0&0
\end{array} \right]
\end{gather}

The jump matrix, though it might look complicated, is equivalent to a matrix with constant entries
\begin{gather}
e^{T^{GEN}} \cdot J_0 \cdot e^{-T^{GEN}} = J^{GEN} \\
T^{GEN} = \text{diag}  \left[-\theta_1, \ldots, -\theta_n, -\theta_{1,\tau}, \ldots, -\theta_{n,\tau}, -\theta_2, \ldots, -\theta_n  \right]
\end{gather}
so that the matrix $\Psi^{GEN}(\lambda) = \Gamma(\lambda) e^{T^{GEN}(\lambda)}$ solves a Riemann-Hilbert problem with constant jumps and it is a solution to a polynomial ODE.

Referring to the theorems proved in \cite{Misomonodromic} and  \cite{MeM} (see also \cite{MeMmulti}), we can claim
\begin{thm}\label{TEOFREDHOLM}
Given $n$ times $\tau_1 < \tau_2 < \ldots < \tau_n$ and given the interval matrix $\chi_{\mathcal{I}}:= \operatorname{diag} \, \left[ \chi_{I_1}, \ldots, \chi_{I_n}\right] $ with
\begin{equation}
I_j:= \left[a_1^{(j}, a_2^{(j)}\right] \cup \left[a_3^{(j)}, a_4^{(j)} \right] \cup \ldots \cup \left[a_{2k_j-1}^{(j)}, a_{2k_j}^{(j)} \right],
\end{equation}
the Fredholm determinant $\det \left( \operatorname{Id} - K^{GEN}\chi_{\mathcal{I}} \right)$ is equal to the isomonodromic $\tau$-function related to the above Riemann-Hilbert problem. 

In particular, $\forall \, j = 1,\ldots, n $ and $\forall \, \ell = 1, \ldots, 2k_j$ we have
\begin{align}
&\partial \ln \det \left( \operatorname{Id} -  K^{GEN} \chi_{\mathcal{I}} \right) = \int_\Sigma \operatorname{Tr} \left( \Gamma_-^{-1}(\lambda) \Gamma'_-(\lambda)\Xi_\partial (\lambda) \right) \frac{d\lambda}{2\pi i} \\
& \Xi_\partial(\lambda) := \partial M(\lambda) M(\lambda)^{-1} = - 2\pi i \, \partial J^{GEN}  \left( I + 2\pi i J^{GEN} \right)
\end{align}
the $'$ notation means differentiation with respect to $\lambda$, while with $\partial$ we denote any of the partial derivatives $\partial_{\tau_j}$, $\partial_{a_{\ell}^{(j)}}$, $\partial_\tau$.
\end{thm}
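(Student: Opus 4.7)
The plan is to specialize the general IIKS-Malgrange-Bertola framework developed in \cite{Misomonodromic,MeM,MeMmulti} to the integrable operator $\mathbb{K}^{GEN}$ built in the previous theorem. By that theorem $\det(\operatorname{Id} - K^{GEN}\chi_{\mathcal{I}}) = \det(\operatorname{Id} - \mathbb{K}^{GEN})$, and the RHP stated above is the canonical one associated to $\mathbb{K}^{GEN}$: the matrix $\Gamma$ exists precisely when $\operatorname{Id} - \mathbb{K}^{GEN}$ is invertible, and encodes the resolvent through $\mathbf{F} = \Gamma\,\mathbf{f}$, $\mathbf{G} = (\Gamma^{-1})^T \mathbf{g}$. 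Moreover, the parameters $\tau_j$, $a_\ell^{(j)}$, $\tau$ enter $M$ only through the scalar exponents $\theta_k, \theta_{k,\tau}$, which are peeled off by the conjugation $M = e^{T^{GEN}} M_0 e^{-T^{GEN}}$ with $M_0$ piecewise constant. Consequently $\Psi := \Gamma\, e^{T^{GEN}}$ solves a RHP with constant jumps, its logarithmic $\lambda$-derivative is rational in $\lambda$, and the deformations in the external parameters are isomonodromic in the Jimbo-Miwa-Ueno sense. The Fredholm determinant is then identified with the associated $\tau$-function.

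To derive the integral formula I would start from Jacobi's identity $\partial \ln \det(\operatorname{Id} - \mathbb{K}^{GEN}) = -\operatorname{Tr}((\operatorname{Id} - \mathbb{K}^{GEN})^{-1} \partial \mathbb{K}^{GEN})$, insert the IIKS representation of the resolvent in terms of $\mathbf{F}$ and $\mathbf{G}$, convert the trace into a contour integral on $\Sigma$, and use the jump $\Gamma_+ = \Gamma_- M$ to turn the derivatives of $\mathbf{F}$, $\mathbf{G}$ into the pairing of $\Gamma_-^{-1}\Gamma_-'$ with $\Xi_\partial = \partial M \cdot M^{-1}$. This is precisely the content of the main deformation theorem in \cite{Misomonodromic}, whose adaptation to the multi-time matrix-valued context is carried out in \cite{MeMmulti}; its hypotheses (piecewise analyticity of $\mathbf{f}, \mathbf{g}$, admissible structure of $\Sigma$, and a Hilbert-Schmidt factorization of $\mathbb{K}^{GEN}$) are all guaranteed by the decomposition lemma proved above.

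The equivalent form $\Xi_\partial = -2\pi i\,\partial J^{GEN}(I + 2\pi i J^{GEN})$ then follows from the nilpotency $(J^{GEN})^2 = 0$: the IIKS compatibility $\mathbf{f}(\lambda)^T \mathbf{g}(\lambda) = 0$ gives, by transposition, $\mathbf{g}(\lambda)^T \mathbf{f}(\lambda) = 0$, whence $(J^{GEN})^2 = (2\pi i)^2 \mathbf{f}\,(\mathbf{g}^T \mathbf{f})\,\mathbf{g}^T = 0$ and so $M^{-1} = I + 2\pi i J^{GEN}$; combined with $\partial M = -2\pi i\,\partial J^{GEN}$ this yields the claim. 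The main obstacle I anticipate is bookkeeping rather than conceptual: carefully tracking which characteristic-function supports in $\mathbf{f}$ and $\mathbf{g}$ interact on each connected component of $\Sigma$, so that the pointwise IIKS compatibility (and hence nilpotency) can be verified piece by piece, and ensuring that the Malgrange-Bertola deformation theorem applies uniformly to all three types of parameters $\tau_j$, $a_\ell^{(j)}$, $\tau$ despite the irregular singular behaviour of the exponents $\theta_k$ at the time-translated origins $\lambda = -4\tau_k$.
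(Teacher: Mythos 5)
Your overall strategy is the same as the paper's: reduce to the IIKS operator $\mathbb{K}^{GEN}$, pass to the canonical RHP, and invoke the Malgrange--Bertola deformation formula from \cite{Misomonodromic}. The nilpotency argument for $\Xi_\partial = -2\pi i\,\partial J^{GEN}(I+2\pi i J^{GEN})$ via $\mathbf{g}^T\mathbf{f}=0$ is correct and is indeed what makes $M^{-1}=I+2\pi i J^{GEN}$.

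However, there is a genuine gap at the central step. The general theorem of \cite{Misomonodromic} does \emph{not} assert the identity you want; it asserts
\begin{equation*}
\omega(\partial) \;=\; \partial \ln \det\left(\operatorname{Id}-K^{GEN}\chi_{\mathcal{I}}\right) \;-\; H(M),
\qquad
H(M) := \int_{\Sigma} \left( \partial \mathbf{f}'^{\,T} \mathbf{g} + \mathbf{f}'^{\,T} \partial \mathbf{g} \right) d\lambda \;-\; 2\pi i \int_{\Sigma} \mathbf{g}^T \mathbf{f}'\, \partial \mathbf{g}^T \mathbf{f} \, d\lambda ,
\end{equation*}
where $\omega(\partial)$ is the contour integral appearing in the statement. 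The correction term $H(M)$ arises unavoidably in the very computation you sketch (Jacobi's identity, resolvent in terms of $\mathbf{F},\mathbf{G}$, conversion to a contour integral), and the entire substantive content of the proof is the verification that $H(M)\equiv 0$ for the specific $\mathbf{f},\mathbf{g}$ of the multi-time Generalized Bessel kernel, for each of the deformation parameters $\tau_j$, $a^{(j)}_\ell$, $\tau$. Your proposal treats the deformation theorem as if it directly yielded the equality once the Hilbert--Schmidt and analyticity hypotheses are checked, so this verification is missing; without it the claimed formula could differ from the true logarithmic derivative by the nonzero functional $H(M)$. You should state the general formula with its correction term and then check explicitly, using the structure of the entries (\ref{IIKSGENBesselmulti1})--(\ref{IIKSGENBesselmulti2}) and the supports of the characteristic functions on the components of $\Sigma$, that both integrals defining $H(M)$ vanish.
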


\begin{proof}
The following formula holds in general (see \cite{Misomonodromic}) 
\begin{equation}
\omega (\partial) = \partial \ln \det (I-K^{GEN}\chi_{\mathcal{I}}) - H(M) \label{Misomonodromictau}
\end{equation}
where
\begin{gather}
\omega (\partial) := \int_\Sigma \operatorname{Tr}\left( \Gamma_-^{-1}(\lambda) \Gamma_-'(\lambda) \Xi_\partial (\lambda) \right) \, \frac{d\lambda}{2\pi i} \\
H(M) := \int_{\Sigma} \left( \partial \textbf{f}\,  '\, ^T \textbf{g} + \textbf{f}\, '\, ^T \partial \textbf{g} \right) d\lambda - 2\pi i \int_{\Sigma} \textbf{g}^T \textbf{f}\, ' \partial \textbf{g}^T \textbf{f} \, d\lambda. \label{Hannoying}
\end{gather}

Therefore, it is enough to verify that 
$H(M)\equiv 0$ with $M(\lambda)  = I-J^{GEN}(\lambda)$.
\end{proof}



Moreover, making use of the Jimbo-Miwa-ueno residue formula (see \cite{Misomonodromic}), it can be shown that
\begin{thm} The following equality holds
\begin{gather}
\int_\Sigma \operatorname{Tr} \left( \Gamma_-^{-1}(\lambda) \Gamma'_-(\lambda)\Xi_\partial (\lambda) \right) \frac{d\lambda}{2\pi i}  \nonumber \\
= - \operatorname{res}_{\lambda = \infty} \operatorname{Tr}\left( \Gamma^{-1}\Gamma' \partial T^{GEN} \right) + \sum_{i=1}^n \operatorname{res}_{\lambda=-4\tau_i}\operatorname{Tr}\left( \Gamma^{-1}\Gamma' \partial T^{GEN} \right) 
\end{gather}
In particular, regarding the derivative with respect to the endpoints $a^{(j)}$ ($j=1,\ldots, n$)
\begin{gather}
\operatorname{res}_{\lambda=-4\tau_k}\operatorname{Tr}\left( \Gamma^{-1}\Gamma' \partial_{a^{(k)}} T^{GEN} \right) \nonumber \\
=  - \left( \Gamma_{0}^{-1} \Gamma_{1} \right)_{(k,k)} -\chi_{k>1} \left( \Gamma_{0}^{-1} \Gamma_{1} \right)_{(2n-1+k,2n-1+k)} 
\end{gather}
Regarding the derivative with respect to $\tau$
\begin{gather}
\operatorname{res}_{\lambda=\infty} \operatorname{Tr}\left( \Gamma^{-1}\Gamma' \partial_{\tau} T^{GEN} \right)= -\sum_{k=1}^n \Gamma_{1; n+k, n+k}
\end{gather}
Finally, regarding the derivative with respect to the times $\tau_j$ ($j=1,\ldots,n$)
\begin{gather}
 \operatorname{res}_{\lambda=\infty} \operatorname{Tr}\left( \Gamma^{-1}\Gamma' \partial_{\tau_k} T^{GEN} \right) = 4\Gamma_{1;n+k, n+k} \\
 \operatorname{res}_{\lambda=-4\tau_k}\operatorname{Tr}\left( \Gamma^{-1}\Gamma' \partial_{\tau_k} T_B \right)  = -4\nu \left(\Phi_{0;k,k} +  \chi_{k>1}\Phi_{0;2n-1+k, 2n-1+k} \right)  \nonumber \\
+4a^{(k)}(\Phi_{1;k,k} + \chi_{k>1}\Phi_{1;2n-1+k, 2n-1+k} )
\end{gather}
where, given the asymptotic expansion of the matrix $\Gamma \sim \Gamma_0 + \lambda_k \Gamma_1 + \lambda_k^2 \Gamma_2 + \cdots$ in a neighbourhood of $-4\tau_k$, we defined $\Phi_0 := \Gamma_0^{-1}\Gamma_1$ and $\Phi_1 := 2\Gamma_0^{-1}\Gamma_2 - \left(\Gamma_0^{-1}\Gamma_1\right)^2$.
\end{thm}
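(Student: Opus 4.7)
The plan is to apply the Jimbo--Miwa--Ueno residue formula to each of the three types of derivatives in turn, exploiting the block--diagonal structure of $T^{GEN}$ already identified in the previous proposition. The general identity itself is proved by the standard JMU manipulation: since $M = e^{T^{GEN}}M_0\,e^{-T^{GEN}}$ with $M_0$ piecewise constant along $\Sigma$, the matrix $\Psi = \Gamma\,e^{T^{GEN}}$ has constant jumps and $\partial\Psi\,\Psi^{-1}$ is meromorphic on $\mathbb{C}\setminus\{\infty,-4\tau_1,\ldots,-4\tau_n\}$. Using the identity $\partial M\,M^{-1} = \partial T^{GEN} - M\,\partial T^{GEN}\,M^{-1}$ (valid because $T^{GEN}$ is diagonal, hence commutes with $\partial T^{GEN}$), the contour integral over $\Sigma$ collapses into a sum of local residues at these points, following the derivation in \cite{Misomonodromic,MeM}. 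Inspection of the diagonal entries of $T^{GEN}$ shows that the phases $\theta_k = a^{(k)}/\lambda_k + \nu\ln\lambda_k$ produce irregular singularities at $\lambda = -4\tau_k$, while $\theta_{h,\tau} = (\lambda_{h,\tau})^2/2$ produces an irregular singularity at infinity; this accounts for the set of residue points that appears.

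For the derivative with respect to $a^{(k)}$, only $\theta_k$ and its duplicate in the third block depend on $a^{(k)}$, with $\partial_{a^{(k)}}\theta_k = 1/\lambda_k$. Hence $\partial_{a^{(k)}}T^{GEN}$ is diagonal, supported at positions $(k,k)$ and, when $k \geq 2$, also $(2n-1+k,\,2n-1+k)$, each with entry $-1/\lambda_k$; it is regular at infinity. Substituting the local expansion $\Gamma = \Gamma_0 + \lambda_k\Gamma_1 + O(\lambda_k^2)$, so that $\Gamma^{-1}\Gamma' = \Gamma_0^{-1}\Gamma_1 + O(\lambda_k)$, and extracting the $\lambda_k^{-1}$ coefficient of the trace yields the stated identity. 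For the derivative with respect to $\tau$, only the second block of $T^{GEN}$ is affected: $\partial_\tau\theta_{h,\tau} = -\lambda_{h,\tau} \sim -\lambda$ at infinity, with no finite poles. Using the asymptotics $\Gamma^{-1}\Gamma' = -\Gamma_1/\lambda^2 + O(\lambda^{-3})$ at infinity, the $\lambda^{-1}$ coefficient of $\operatorname{Tr}(\Gamma^{-1}\Gamma'\,\partial_\tau T^{GEN})$ is immediately $-\sum_{k=1}^n \Gamma_{1;n+k,n+k}$.

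The case $\partial_{\tau_k}$ is the most delicate and is where I expect the real work. Here $\tau_k$ enters $T^{GEN}$ through both $\theta_k$ (in the first block and, for $k \geq 2$, its duplicate in the third block) and $\theta_{k,\tau}$ (in the second block), with
\[
\partial_{\tau_k}\theta_k = -\frac{4a^{(k)}}{\lambda_k^2} + \frac{4\nu}{\lambda_k}, \qquad \partial_{\tau_k}\theta_{k,\tau} = 4\lambda_{k,\tau}.
\]
Thus $\partial_{\tau_k}T^{GEN}$ grows linearly at infinity \emph{and} has a \emph{double} pole at $-4\tau_k$. The contribution at infinity is isolated exactly as in the $\partial_\tau$ case, but now only the single entry $n+k$ contributes, producing $4\Gamma_{1;n+k,n+k}$. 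The double pole at $-4\tau_k$ is the main obstacle: one must carry the local expansion of $\Gamma^{-1}\Gamma'$ one order further, namely $\Gamma^{-1}\Gamma' = \Phi_0 + \lambda_k\Phi_1 + O(\lambda_k^2)$, and a direct computation starting from $\Gamma = \Gamma_0 + \lambda_k\Gamma_1 + \lambda_k^2\Gamma_2 + \cdots$ gives $\Phi_0 = \Gamma_0^{-1}\Gamma_1$ and $\Phi_1 = 2\Gamma_0^{-1}\Gamma_2 - (\Gamma_0^{-1}\Gamma_1)^2$, exactly as defined in the statement. Extracting the $\lambda_k^{-1}$ coefficient of $\operatorname{Tr}(\Gamma^{-1}\Gamma'\,\partial_{\tau_k}T^{GEN})$ and summing the contributions from the $(k,k)$ and, when $k \geq 2$, $(2n-1+k,\,2n-1+k)$ entries then delivers the claimed formula. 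The remaining difficulty is essentially bookkeeping: making sure that the linear-growth and double-pole contributions are not double counted and that the indexing convention for the third block ($k\mapsto 2n-1+k$ with $k \geq 2$) is respected throughout.
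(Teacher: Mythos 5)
Your proposal is correct and follows essentially the same route as the paper: compute $\partial T^{GEN}$ explicitly for each parameter, observe which of $\lambda=\infty$ and $\lambda=-4\tau_k$ can contribute, and extract the residues from the local expansions $\Gamma\sim I+\Gamma_1/\lambda+\cdots$ at infinity and $\Gamma\sim\Gamma_0+\lambda_k\Gamma_1+\lambda_k^2\Gamma_2+\cdots$ at $-4\tau_k$, with $\Phi_0=\Gamma_0^{-1}\Gamma_1$ and $\Phi_1=2\Gamma_0^{-1}\Gamma_2-(\Gamma_0^{-1}\Gamma_1)^2$ handling the double pole in the $\partial_{\tau_k}$ case. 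Your added justification of the contour-to-residues reduction via $\partial M\,M^{-1}=\partial T^{GEN}-M\,\partial T^{GEN}M^{-1}$ is a (correct) elaboration of what the paper simply cites from the Jimbo--Miwa--Ueno literature.
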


\begin{oss}
We stated the second part of the theorem above in the simple case $\mathcal{I} = \operatorname{diag} \left[\chi_{[0,a^{(1)}]}, \ldots, \chi_{[0,a^{n}]}\right]$ in order to avoid heavy notation. The general case follows the same guidelines shown in the proof.
\end{oss}

\begin{proof}
We will calculate the residues separately and we will focus on the different parameters ($a^{(j)}$, $\tau_j$ and $\tau$).

\underline{Residue at $\infty$.}
There's no contribution from the residue at infinity when we consider the derivative with respect to the endpoints $a^{(j)}$.
On the other hand, taking the derivative with respect to the times $\tau_k$ gives:
\begin{gather}
\partial_{\tau_k}T^{GEN} = \nonumber \\
 \left(\frac{4a^{(k)}}{\lambda_k^2}-\frac{4\nu}{\lambda_k} \right)(E_{k,k} + \chi_{k>1}E_{2n-1+k, 2n-1+k} ) - 4\lambda_{k,\tau} E_{n+k, n+k}
\end{gather}
thus the residue is 
\begin{gather}
\operatorname{res}_{\lambda=\infty} \operatorname{Tr}\left( \Gamma^{-1}\Gamma' \partial_{\tau_k} T^{GEN} \right) = 4\Gamma_{1;n+k, n+k} \ \ \ \forall \, k=1, \ldots, n
\end{gather}
We follow a similar argument for the parameter $\tau$:
\begin{gather}
 \partial_{\tau} T^{GEN} = \sum_{k=1}^n \lambda_{k,\tau} E_{n+k, n+k}  
\end{gather}
Thus,
\begin{gather}
\operatorname{res}_{\lambda=\infty} \operatorname{Tr}\left( \Gamma^{-1}\Gamma' \partial_{\tau} T^{GEN} \right)= -\sum_{k=1}^n \Gamma_{1; n+k, n+k}.
\end{gather}

\underline{Residue at $4\tau_k$.}
We recall the asymptotic expansion of the matrix $\Gamma$ in a neighbourhood of $-4\tau_k$:
\begin{gather}
\Gamma \sim \Gamma_0 + \lambda_k \Gamma_1 + \lambda_k^2 \Gamma_2 + \cdots \ \ \ \ \ \ \ \ \lambda \rightarrow -4\tau_k, \ \forall \, k=1,\ldots, n
\end{gather}

\begin{oss} Note that the asymptotic expansion near $-4\tau_k$ is, in general, different for each $k$,
but we wrote them in this way in order to avoid heavy notation.
\end{oss}


Regarding the derivative with respect to the endpoints $a^{(k)}$, we have
\begin{gather}
\partial_{a^{(k)}}T^{GEN} = - \frac{1}{\lambda_k}\left[E_{k,k} + \chi_{k>1}E_{2n-1+k, 2n-1+k} \right]
\end{gather}
which implies
\begin{gather}
\text{res}_{\lambda=-4\tau_k}\text{Tr}\left( \Gamma^{-1}\Gamma' \partial_{a^{(k)}} T^{GEN} \right) =  - \left( \Gamma_{0}^{-1} \Gamma_{1} \right)_{(k,k)} -\chi_{k>1} \left( \Gamma_{0}^{-1} \Gamma_{1} \right)_{(2n-1+k,2n-1+k)} 
\end{gather}
and regarding the derivative with respect to the times $\tau_k$, we have
\begin{gather}
\partial_{\tau_k}T^{GEN} = \nonumber \\
\left(\frac{4a^{(k)}}{\lambda_k^2}- \frac{4\nu}{\lambda_k} \right) \left[ E_{k,k} + \chi_{k>1}E_{2n-1+k, 2n-1+k} \right] - 4\lambda_{k,\tau} E_{n+k, n+k}
\end{gather}
thus,
\begin{gather}
\text{res}_{\lambda=-4\tau_k}\text{Tr}\left( \Gamma^{-1}\Gamma' \partial_{\tau_k} T_B \right)  = -4\nu \left(\Phi_{0;k,k} +  \chi_{k>1}\Phi_{0;2n-1+k, 2n-1+k} \right)  \nonumber \\
+4a^{(k)}(\Phi_{1;k,k} + \chi_{k>1}\Phi_{1;2n-1+k, 2n-1+k} )
\end{gather}
There is no contribution from the residue at $-4\tau_k$ ($k=1,\ldots,n$) when taking the derivative with respect to $\tau$.
\end{proof}

\appendix
\section{Building the multi-time Generalized Bessel kernel}

The starting point of this investigation is the known single-time kernel discovered by Kuijlaars \textit{et al.} (\cite{Kuij})
\begin{gather}
K^{KMW}_\nu(x,y; \tau) 
=  \iint_{\gamma \times \hat \gamma} \frac{dt\, ds}{(2\pi i)^2 } \, \frac{e^{-xs - \frac{\tau}{s} + \frac{1}{2s^2} + yt + \frac{\tau}{t} - \frac{1}{2t^2}}}{s-t} \left(\frac{s}{t} \right)^\nu
\end{gather}
where the curve $\gamma$ is an unbounded curve that extends from $-\infty$ to zero and then back to $-\infty$, encircling the origin in a counterclockwise way, and $\hat \gamma := \frac{1}{\gamma}$; the logarithmic cut is on $\mathbb{R}_-$, as shown in \figurename \  \ref{ArnoGenBes}. 

The diffusion kernel related to the Squared Bessel Paths is
\begin{gather}
p (x,y, \Delta) := \left( \frac{y}{x} \right)^{\frac{\nu}{2}} \frac{1}{\Delta} e^{-\frac{x+y}{4\Delta}} I_\nu\left( \frac{\sqrt{xy}}{2\Delta} \right) 
\end{gather}
where $\Delta>0$ represents the gap between two given times $\tau_i$ and $\tau_j$ and $I_\nu$ is the modified Bessel function of first kind
 (the same diffusion kernel appears in the definition of the multi-time Bessel kernel; see \cite{Me}).

The extended multi-time kernel is given by
\begin{gather}
K = H- P_\Delta
\end{gather}
where in particular $P_\Delta$ is a strictly upper-triangular matrix with $(i,j)$-entry $P_{\Delta_{ij}}:= p(x,y, \Delta_{ij})$ when $j>i$ ($\Delta_{ij}:= |\tau_i-\tau_j| >0 $). This is essentially the derivation in \cite{Eynard} applied to case at hand. 
\begin{thm}
The multi-time Generalized Bessel operator on $L^2(\mathbb{R}_+)$ with times $\tau_1<\ldots< \tau_n$ is defined through a matrix kernel with the following entries $K^{GEN}:= H_{ij} + \chi_{i<j} P_{\Delta_{ij}}$ ($i,j=1,\ldots, n$)
\begin{gather}
H_{ij}(x,y) :=  4 \iint_{\hat \gamma \times \gamma} \frac{ds \, dt}{(2\pi i)^2}  \frac{e^{  - xs + yt + \frac{1}{2}\left( \tau-\frac{1}{s} + 4\Delta_{ji}  \right)^2 - \frac{1}{2} \left( \tau-\frac{1}{t} \right)^2 }}{ \left(t - s - 4\Delta_{ji}ts \right)} \left( \frac{s}{t} \right)^\nu \label{GBcopyright1}
\\
P_{\Delta_{ij}} (x,y) 
= - \frac{1}{\Delta_{ji}} \int_\gamma e^{ \frac{x}{4\Delta_{ji}}\left( \frac{1}{w}-1 \right)+\frac{y}{4\Delta_{ji}}(w-1)} w^{-\nu} \frac{dw}{(2\pi i )w}  \label{GBcopyright2}
\end{gather}
the curve $\gamma$ is the same one as in the single-time Extended Bessel kernel (a contour that winds around zero counterclockwise an extends to $-\infty$) and $\hat\gamma := \frac{1}{\gamma}$; $\Delta_{ji}:= \tau_j-\tau_i>0$.
\end{thm}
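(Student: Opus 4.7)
The plan is to follow Eynard's recipe \cite{Eynard} for the extended kernel of a non-intersecting diffusion process, here specialized to Squared Bessel paths. The ingredients are the single-time correlation kernel $K^{KMW}_\nu(\cdot,\cdot;\tau)$ and the transition density $p(x,y,\Delta)$, and the output is the matrix kernel $K_{ij}(x,y) = H_{ij}(x,y) - \chi_{i<j}\, p(x,y,\Delta_{ji})$, where $H_{ij}$ is the single-time kernel propagated by the Squared Bessel semigroup through time $\Delta_{ji}$. The proof therefore reduces to verifying two contour integral identities: the double contour integral (\ref{GBcopyright1}) for $H_{ij}$, and the single contour integral (\ref{GBcopyright2}) for $p(x,y,\Delta_{ji})$.

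For (\ref{GBcopyright1}), I would first rewrite the phase of $K^{KMW}_\nu$ using the identities $\frac{\tau}{t}-\frac{1}{2t^2} = \frac{\tau^2}{2} - \frac{1}{2}(\tau-1/t)^2$ and $-\frac{\tau}{s}+\frac{1}{2s^2} = \frac{1}{2}(\tau-1/s)^2 - \frac{\tau^2}{2}$, so that the single-time kernel assumes the ``squared'' form already visible in (\ref{GBcopyright1}). The propagation by $\Delta_{ji}$ corresponds, after invoking the Chapman--Kolmogorov identity for Squared Bessel, to the substitution $1/s \to 1/s - 4\Delta_{ji}$ (equivalently $s \mapsto s/(1-4\Delta_{ji}s)$) in the $s$-integral on $\hat\gamma$. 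This shift turns $\frac{1}{2}(\tau-1/s)^2$ into $\frac{1}{2}(\tau-1/s+4\Delta_{ji})^2$; combining the Jacobian of the change of variable with the Cauchy denominator $(s-t)^{-1}$ produces $(t-s-4\Delta_{ji}ts)^{-1}$; the overall prefactor $4$ reads off from the normalization $\Delta_{ji}^{-1}$ of the transition density.

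For (\ref{GBcopyright2}), I would substitute into $p(x,y,\Delta) = (y/x)^{\nu/2}\Delta^{-1}e^{-(x+y)/(4\Delta)} I_\nu(\sqrt{xy}/(2\Delta))$ the Schl\"afli--Hankel representation $I_\nu(z) = \frac{1}{2\pi i}\int_\gamma e^{\frac{z}{2}(w+1/w)} w^{-\nu-1}\,dw$ on a Hankel contour around the origin. The change of variables $w \mapsto \sqrt{y/x}\, w$ absorbs the prefactor $(y/x)^{\nu/2}$ into $w^{-\nu-1}$ and turns the Bessel exponent into $\frac{x}{4\Delta w}+\frac{yw}{4\Delta}$; combining with $e^{-(x+y)/(4\Delta)}$ gives the exponent $\frac{x}{4\Delta}(1/w-1)+\frac{y}{4\Delta}(w-1)$ of (\ref{GBcopyright2}). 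The minus sign in (\ref{GBcopyright2}) is then matched to the convention $K = H - P_\Delta$ of the Eynard construction.

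The main obstacle will be making the contour deformation underlying the first step rigorous: the substitution $s \mapsto s/(1-4\Delta_{ji}s)$ on $\hat\gamma$ must be shown not to cross $\gamma$ nor the logarithmic branch cut on $\mathbb{R}_-$, so that the simple pole at $s=t$ of the original Cauchy kernel transforms cleanly into the pole at $t-s-4\Delta_{ji}ts=0$ and the analytic manipulation of the phase remains valid. This relies on $\Delta_{ji}>0$ together with the tangency of $\hat\gamma$ to the origin and its strict location inside $\gamma$; once the homotopy is established, the remaining verification is purely algebraic and mirrors the derivations already carried out in \cite{Me, Eynard} for the non-generalized Bessel case.
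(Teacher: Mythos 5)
Your overall framework is the right one: the paper likewise proves this theorem by checking that the pair $(H_{ij}, P_{\Delta_{ij}})$ satisfies the Eynard--Mehta compatibility relations from \cite{Eynard}, and your derivation of (\ref{GBcopyright2}) from the Schl\"afli integral representation of $I_\nu$ together with the rescaling $w\mapsto\sqrt{y/x}\,w$ is correct and is exactly what is needed for the transition-density part. The gap is in your treatment of (\ref{GBcopyright1}). You assert that ``propagation by $\Delta_{ji}$'' amounts to the substitution $1/s\to 1/s-4\Delta_{ji}$, ``after invoking the Chapman--Kolmogorov identity for Squared Bessel''. Chapman--Kolmogorov only governs $p_{\Delta_1}\ast p_{\Delta_2}=p_{\Delta_1+\Delta_2}$; it says nothing about $p_{\Delta}\ast K^{KMW}_\nu$. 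The kernel $K^{KMW}_\nu$ is a critical scaling limit, not a spectral decomposition of the squared-Bessel semigroup, so there is no a priori reason the convolution should act as a mere parameter shift --- that claim \emph{is} the content of the theorem and has to be computed. The paper does compute it: it writes $\int_0^\infty P_{\Delta_{ij}}(x,z)\,H_{jj}(z,y)\,dz$ with both factors in contour form, performs the $z$-integration first (which converges along the chosen contours and produces a Cauchy-type factor $\bigl(s-\tfrac{w-1}{4\Delta}\bigr)^{-1}$ coupling the $w$- and $s$-variables), and then evaluates the $w$-integral by residues; it is this residue evaluation that simultaneously generates the shifted quadratic phase $\tfrac12\bigl(\tau-\tfrac1s+4\Delta_{ji}\bigr)^2$, the deformed denominator $t-s-4\Delta_{ji}ts$, and the modified power of $s/t$. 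None of this appears in your proposal; the substitution is stated as if it were known.

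Two further points. First, the ``main obstacle'' you identify (the homotopy of $\hat\gamma$ under the M\"obius map $s\mapsto s/(1-4\Delta_{ji}s)$) is secondary to the missing computation above; the analytic issue that actually needs attention is the convergence of the $z$-integral at $+\infty$, which dictates the relative position of the contours. Second, the Eynard--Mehta relations must be verified for both orderings of the time indices: the paper separately checks $P_{\Delta_{ij}}\ast H_{jj}=H_{ij}$ for $i<j$ and $P_{\Delta_{ij}}\ast H_{ji}=H_{ii}$ for $j>i$ (together with the vanishing in the remaining cases), whereas your proposal only addresses the forward propagation.
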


The proof is based on the verification that the definition of the kernel above satisfies the Eynard-Mehta theorem on multi-time kernels (\cite{Eynard}). 

First of all, we define a convolution operation (see \cite[formula (3.2)]{Eynard}).
\begin{defn}
Given two functions $f,g$ with suitable regularity, we define the convolution $f\ast g$ as
\begin{equation}
(f\ast g) (\xi,\eta) = \int f(\xi,\zeta) g(\zeta,\eta) \, d\zeta.
\end{equation}
\end{defn}

Recalling formul\ae \ (3.12)-(3.13) from \cite{Eynard}, we will verify the following relations between the diffusion kernel $P_\Delta$ and the kernel $H$:
\begin{gather}
H_{ij} \ast P_{\Delta_{jk}} = \left\{ \begin{array}{cc} H_{ik} & j<k \\ 0 & j\geq k \end{array} \right. \\
P_{\Delta_{ij}}\ast H_{jk}= \left\{ \begin{array}{cc} H_{ik} & i<j \\ 0 & i\geq j \end{array} \right. \label{convKP22}
\end{gather}

\begin{proof} 
We set $\Delta := |\tau_i-\tau_j| >0$. Regarding the upper diagonal terms ($i<j$)
\begin{gather}
H_{ij}(x,y)=\int_0^\infty P_{\Delta_{ij}}(x,z) H_{jj}(z,y)  \, dz = \nonumber \\
  \int_0^\infty \frac{dz}{ \Delta}  \int_\gamma \frac{dw}{2\pi iw} e^{\frac{x}{4\Delta}\left( \frac{1}{w}-1 \right) +\frac{z}{4\Delta}(w-1)} w^{-\nu} \,  \iint_{\gamma \times \hat \gamma} \frac{dt\, ds}{(2\pi i)^2} \frac{e^{-zs + yt + \frac{1}{2}\left( \tau-\frac{1}{s}  \right)^2 - \frac{1}{2} \left( \tau-\frac{1}{t} \right)^2 }}{t-s} \left(\frac{s}{t} \right)^\nu
\end{gather}

Integrating in $z$ and taking calculating a residue, we have
\begin{gather}
 \frac{1}{\Delta}   \iint_{\gamma \times \hat \gamma_u} \frac{dt\, du}{(2\pi i)^2} \frac{e^{-xu+ yt + \frac{1}{2}\left( \tau+4\Delta-\frac{1}{u}  \right)^2 - \frac{1}{2} \left( \tau-\frac{1}{t} \right)^2 }}{\left(u-t +4\Delta u t\right)} \left(\frac{u}{t} \right)^\nu 
\end{gather}

As for the lower diagonal term, we need to verify that
\begin{gather}
\int_0^\infty P_{\Delta_{ij}}(x,z) H_{ji}(z,y) \, dz = H_{ii}(x,y)  \ \ \ j>i
\end{gather}
with
\begin{gather}
H_{ji}(x,y) =  \frac{4}{(2\pi i)^2} \iint_{\hat \gamma \times \gamma} \frac{du \, dt}{ut}  \frac{e^{  - xu + yt + \frac{1}{2}\left( \tau-\frac{1}{u} - 4\Delta_{ij}  \right)^2 - \frac{1}{2} \left( \tau-\frac{1}{t} \right)^2 }}{ \left(\frac{1}{u} - \frac{1}{t} + 4\Delta_{ij} \right)} \left( \frac{u}{t} \right)^\nu
\end{gather}

Again ,we set $\Delta := |\tau_i-\tau_j| >0$. 
\begin{gather}
 \int_0^\infty \frac{4\, dz}{\Delta} \int_\gamma \frac{dw}{(2\pi i )w} e^{\frac{x}{4\Delta}\left( \frac{1}{w}-1 \right) +\frac{z}{4\Delta}(w-1) } w^{-\nu}    \iint_{\hat \gamma \times \gamma} \frac{du \, dt}{(2\pi i)^2}  \frac{e^{ -zu +yt + \frac{1}{2}\left( \tau-\frac{1}{u} -4\Delta  \right)^2 - \frac{1}{2} \left( \tau-\frac{1}{t} \right)^2 }}{ \left(u - t - 4\Delta tu \right)} \left( \frac{u}{t} \right)^\nu
\end{gather}
we integrate in $z$ and calculate a residue to get
\begin{gather}
  \frac{4}{\Delta}   \iint_{\hat \gamma \times \gamma} \frac{dv \, dt}{(2\pi i)^2 }  \frac{e^{- xv +yt + \frac{1}{2}\left( \tau-\frac{1}{v}  \right)^2 - \frac{1}{2} \left( \tau-\frac{1}{t} \right)^2 }}{ \left(t-v  \right) } \left( \frac{v}{t} \right)^\nu  
\end{gather}
\end{proof}

Independently from the present work and almost simultaneously, Veto and Delvaux (\cite{Balint}) introduced another version of the multi-time Generalized Bessel operator, called Hard-edge Pearcey process. 

The kernel of the Hard-edge Pearcey reads $L^{\nu}:= H - P_\Delta$ with entries
\begin{gather}
H_{ij}(x,y, \sigma):= \nonumber\\
\left(  \frac{y}{x}\right)^\nu \int_{\Gamma_{-\tau_i}} \frac{d\eta}{2\pi i} \int_{i\mathbb{R}+\delta} \frac{d\xi}{2\pi i} \frac{e^{ - \frac{1}{2}\left(\eta-\sigma \right)^2 + \frac{x}{\eta+\tau_i} - \frac{1}{2}\left(\xi-\sigma\right)^2 - \frac{y}{\xi+\tau_j}}}{(\eta-\xi)(\eta+\tau_i)(\xi+\tau_j)} \left(\frac{\eta+\tau_i}{\xi+\tau_j}\right)^\nu 
\end{gather}
and $P_{\Delta}$ the usual transition density as above. $\Gamma_{-\tau_i}$ is a clockwise oriented closed loop which intersects the real line at a point to the right of $-\tau_i$, and also at $-\tau_i$ itself, where it has a cusp at angle $\pi$; $\delta >0$ is chosen such that the contour $i\mathbb{R} + \delta$ passes to the right of the singularity at $-t$ and to the right of the contour $\Gamma_{-\tau_i}$. The logarithmic branch is cut along the negative half-line.

\begin{prop}
The Hard-edge Pearcey operator is the transposed of the Generalized Bessel operator (\ref{GBcopyright1})-(\ref{GBcopyright2}) defined above. More precisely,
\begin{equation}
L^\nu (x,y; \sigma) = \left(\frac{y}{x}\right)^\nu K^{GEN} (y, x ;  \tau_i+\sigma)
\end{equation}
\end{prop}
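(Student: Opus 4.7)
The proof strategy is to realize the claimed identity by a change of variables in the double contour integral defining $L^{\nu}(x,y;\sigma)$. The natural substitution, suggested by comparing the two integrands, is
\[
\eta + \tau_i = \frac{1}{t}, \qquad \xi + \tau_j = \frac{1}{s},
\]
so that $x/(\eta+\tau_i) = xt$ and $y/(\xi+\tau_j) = ys$, which are precisely the linear factors appearing in the exponent of the Generalized Bessel kernel after swapping the roles of $x$ and $y$. The Jacobian is $d\eta\,d\xi = \frac{dt\,ds}{t^{2}s^{2}}$, while
\[
(\eta-\xi)(\eta+\tau_i)(\xi+\tau_j) = \frac{s - t + (\tau_j-\tau_i)\,ts}{t^{2}s^{2}} = \frac{s - t + \Delta_{ji}\,ts}{t^{2}s^{2}},
\]
which produces exactly the $1/(s - t + \Delta_{ji}ts)$ factor in $K^{GEN}$. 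The ratio of powers becomes $\bigl((\eta+\tau_i)/(\xi+\tau_j)\bigr)^{\nu} = (s/t)^{\nu}$, matching the weight of the Generalized Bessel integrand, and the overall prefactor $(y/x)^{\nu}$ in $L^{\nu}$ combines with a corresponding $(x/y)^{\nu}$ from $K^{GEN}(y,x;\cdot)$.

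The next step is to handle the exponents. The terms $-\tfrac12(\eta-\sigma)^{2}$ and $-\tfrac12(\xi-\sigma)^{2}$ must be rewritten, after substitution, as $-\tfrac12(\tau_i+\sigma - 1/t)^{2}$ and $-\tfrac12(\tau_j+\sigma - 1/s)^{2}$ respectively. Using the identity $\tau_j = \tau_i + \Delta_{ji}$, the second term is recognized as $-\tfrac12\bigl((\tau_i+\sigma) - 1/s + \Delta_{ji}\bigr)^{2}$, which is precisely the quadratic phase that appears in $K^{GEN}$ with the parameter $\tau$ shifted to $\tau_i+\sigma$. This is what the proposition refers to as the \emph{translation of the parameter $\tau$}.

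One must then verify that the contours transform correctly: under $\eta \mapsto 1/(\eta+\tau_i)$, the clockwise loop $\Gamma_{-\tau_i}$ around $-\tau_i$ (with cusp at $-\tau_i$) is mapped to an unbounded loop through infinity which, after Cauchy deformation, coincides with the contour $\gamma$ used in the Generalized Bessel kernel; similarly, under $\xi \mapsto 1/(\xi+\tau_j)$, the vertical line $i\mathbb{R}+\delta$ is sent to a bounded circle through the origin lying in the right half-plane, which can be continuously deformed to $\hat\gamma = 1/\gamma$. The orientation reversal coming from the inversion cancels against the orientation reversal coming from the Jacobian's sign, so no extra minus sign is introduced. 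The $(y/x)^{\nu}$ prefactor of $L^{\nu}$ is absorbed into the identity, while the $(y/x)^{\nu}$ on the right-hand side of the claim is exactly what appears when swapping arguments in $K^{GEN}$ (which carries a factor $(y/x)^{\nu}$ in its definition).

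The main obstacle will be the bookkeeping of signs and branch choices. Specifically: tracking the orientation of $\Gamma_{-\tau_i}$ and $i\mathbb{R}+\delta$ through the M\"obius map to confirm they give the same orientation conventions as $\gamma$ and $\hat\gamma$; verifying that the logarithmic cut on $\mathbb{R}_{-}$ transforms consistently so that the fractional power $\bigl((\eta+\tau_i)/(\xi+\tau_j)\bigr)^{\nu}$ becomes $(s/t)^{\nu}$ on the correct Riemann sheet; and ensuring that the two singular points (the zero of $\eta-\xi$ and the zeros of $\eta+\tau_i$, $\xi+\tau_j$) are handled so that no additional residue contribution arises during the contour deformation. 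Once these checks are in place, the equality of the two integrands together with the matching of the contours yields the claimed identity.
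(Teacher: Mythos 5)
Your proof takes essentially the same route as the paper's, whose entire argument is the single remark that the result ``comes from straightforward changes of variables'': the substitution $\eta+\tau_i=1/t$, $\xi+\tau_j=1/s$ is exactly the intended one, and your computation of the Jacobian, of the denominator $(\eta-\xi)(\eta+\tau_i)(\xi+\tau_j)=\bigl(s-t+\Delta_{ji}ts\bigr)/(t^2s^2)$, of the power $(s/t)^\nu$, and of the contour images ($\Gamma_{-\tau_i}\to\gamma$, $i\mathbb{R}+\delta\to\hat\gamma$) correctly supplies the detail the paper omits. The residual mismatches you would encounter in the final matching (the overall factor $4$, $4\Delta_{ji}$ versus $\Delta_{ji}$, and the sign of the quadratic phase in $\xi$) stem from the paper's own inconsistent normalizations between the Appendix kernel and the Section~3 kernel rather than from any flaw in your argument.
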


\begin{proof}
The results comes from straightforward changes of variables.
\end{proof}




\begin{cor}[\cite{Balint}]
In the single-time case ($\tau_i=\tau_j$), both the Generalized Bessel kernel and the Hard-edge Pearcey kernel coincide with the single-time kernel defined in \cite{Kuij}, up to a transposition:
\begin{equation}
\left.L^\nu (x,y;\sigma)\right|_{ \Delta_{ij}=0} = \left(\frac{y}{x}\right)^\nu K^{KMW}_\nu (y, x;\tau_i+\sigma) = \left(\frac{y}{x}\right)^\nu \left. K^{GEN}(y,x;\tau_i+\sigma)\right|_{ \Delta_{ij}=0}
\end{equation}
\end{cor}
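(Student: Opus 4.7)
The plan is to obtain the Corollary as an immediate specialization of the preceding Proposition, supplemented by a direct term-by-term comparison in the exponent at coincident times.

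First, I would apply the preceding Proposition, which asserts $L^\nu(x,y;\sigma) = (y/x)^\nu K^{GEN}(y,x;\tau_i+\sigma)$, and then restrict to the diagonal case $\Delta_{ij}=0$. This collapses the diffusion contribution $P_{\Delta_{ij}}$ (which is absent when $i=j$ in the construction $K^{GEN}=H+\chi_{i<j}P_{\Delta_{ij}}$) and produces the first equality of the Corollary, provided one accepts that the single-time case of $K^{GEN}$ agrees with $K^{KMW}_\nu$ up to the chosen normalization convention.

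Second, to establish the second equality, I would directly compare the integral representations of $K^{GEN}(y,x;\tau_i+\sigma)\big|_{\Delta_{ij}=0}$ and $K^{KMW}_\nu(y,x;\tau_i+\sigma)$. The key calculation is the expansion of the Gaussian factors appearing in $H_{ii}$,
\begin{equation*}
\tfrac{1}{2}\bigl(\tau-\tfrac{1}{s}\bigr)^2 - \tfrac{1}{2}\bigl(\tau-\tfrac{1}{t}\bigr)^2 = -\tfrac{\tau}{s}+\tfrac{1}{2s^2}+\tfrac{\tau}{t}-\tfrac{1}{2t^2},
\end{equation*}
in which the common $\tau^2/2$ terms cancel. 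After swapping the roles of $x$ and $y$ (the transposition in the statement), the exponent is exactly that of $K^{KMW}_\nu$ and the contours $\gamma,\hat\gamma$ are the same by definition; the $(s/t)^\nu$ and Cauchy kernel factors then match.

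Finally, chaining the two equalities gives the statement. The main (only) subtlety is keeping track of the transposition and of which contour carries $s$ versus $t$, together with the sign arising from $(t-s)$ versus $(s-t)$ in the denominator; this is what the author elsewhere summarizes as ``up to a transposition of the operator and a translation of the parameter $\tau$'', and it is handled by the single change of variables $s\leftrightarrow t$, $x\leftrightarrow y$. No asymptotic or analytic obstruction arises: the proof is purely algebraic, as the author indicates in the parallel Proposition.
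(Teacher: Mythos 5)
Your proposal is correct and matches the paper's (implicit) route: the paper offers no separate proof of this Corollary, presenting it as the immediate $\Delta_{ij}=0$ specialization of the preceding Proposition, whose own proof is just the ``straightforward changes of variables'' that you carry out explicitly via the Gaussian expansion $\tfrac{1}{2}(\tau-\tfrac{1}{s})^2-\tfrac{1}{2}(\tau-\tfrac{1}{t})^2=-\tfrac{\tau}{s}+\tfrac{1}{2s^2}+\tfrac{\tau}{t}-\tfrac{1}{2t^2}$ and the $x\leftrightarrow y$, $s\leftrightarrow t$ transposition. Your version is, if anything, more detailed than what the paper records.
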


\begin{oss}
Since the Fredholm determinant is invariant under transposition, we preferred to work on the version given by Veto and Delvaux for the multi-time Generalized Bessel operator, because of more straighforward calculations which reminds more closely the ones performed for the Bessel process (\cite{Me}).
\end{oss}

\section*{Acknoledgements}
The author would like to thank Dr. B\'alint Vet\"o for the useful exchange of emails and for being willing to share his and Dr. Delvaux's results.

\bibstyle{plain}
\bibliography{GenBessel}

\end{document}